\documentclass[journal,10pt]{IEEEtran}

\ifCLASSINFOpdf
\usepackage[pdftex]{graphicx}

\else

\fi

\usepackage[cmex10]{amsmath}

\usepackage{bbm}
%
\usepackage{float}
\usepackage{algorithmic}

\usepackage{array}

\usepackage{color}
\usepackage{amsfonts}
\usepackage{changebar}
\usepackage{graphicx}
\usepackage{epsfig}
\usepackage{amsthm}

\usepackage{epstopdf}
\usepackage{wrapfig}
\usepackage{subfig}
\usepackage{caption}
\usepackage{algorithm}
\usepackage{algorithmic}
\usepackage{tabularx}
\usepackage[table]{xcolor}

\newtheorem{theorem}{Theorem}

\newtheorem{remark}{Remark}
\newtheorem{proposition}{Proposition}

\DeclareMathOperator*{\argmax}{\arg\!\max}

\begin{document}

\title{Prospect Pricing in Cognitive Radio Networks}

\author{\IEEEauthorblockN{Yingxiang~YANG\IEEEauthorrefmark{1}~\and~
Leonard T. Park\IEEEauthorrefmark{1}~\and~Narayan B. Mandayam\IEEEauthorrefmark{1}~\and~Ivan Seskar\IEEEauthorrefmark{1}~\and\newline Arnold Glass\IEEEauthorrefmark{2}~\and~
Neha Sinha\IEEEauthorrefmark{2}}

\IEEEauthorblockA{\IEEEauthorrefmark{1}Department of ECE, Rutgers University\\Yangyx891121@gmail.com, narayan@winlab.rutgers.edu}
    
\IEEEauthorblockA{\IEEEauthorrefmark{2}Department of Psychology, Rutgers University\\aglass@rci.rutgers.edu, nehasinha132@gmail.com}
}
\IEEEspecialpapernotice{(Invited Paper)}

\maketitle

\begin{abstract}
Advances in cognitive radio networks have primarily focused on the design of spectrally agile radios and novel spectrum sharing techniques that are founded on Expected Utility Theory (EUT). In this paper, we consider the development of novel spectrum sharing algorithms in such networks taking into account human psychological behavior of the end-users, which often deviates from EUT. Specifically, we consider the impact of end-user decision making on pricing and management of radio resources in a cognitive radio enabled network when there is uncertainty in the Quality of Service (QoS) guarantees offered by the Service Provider (SP). Using Prospect Theory (a Nobel-Prize-winning behavioral economic theory that captures human decision making and its deviation from EUT), we design data pricing and channel allocation algorithms for use in cognitive radio networks by formulating a game theoretic analysis of the interplay between the price offerings, bandwidth allocation by the SP and the service choices made by end-users. We show that, when the end-users under-weight the service guarantee, they tend to reject the offer which results in under-utilization of radio resources and revenue loss. We propose prospect pricing, a pricing mechanism that can make the system robust to decision making and improve radio resource management. We present analytical results as well as preliminary human subject studies with video QoS.
\end{abstract}
\begin{IEEEkeywords}
Game Theory, Prospect Theory, Probability Weighting, Prospect Pricing
\end{IEEEkeywords}

\section{Introduction}

\IEEEPARstart{C}{ognitive} Radio Networks (CRNs) \cite{haykin2005cognitive} and advanced spectrum sharing techniques have been studied extensively over the past decade \cite{biglieri2012principles}. In general, game theory plays a major role in studying the economical effects that CRNs could bring to the Service Providers (SPs), as well as the optimal radio resource management for the SPs when designing spectrum sharing rules and algorithms. Some examples of the applications of game theory in CRNs include auction-based spectrum sharing \cite{huang2006auction}, data pricing \cite{niyato2008competitive}, power control and allocation \cite{chen2008cognitive}\cite{bloem2007stackelberg}, Quality of Service (QoS) management \cite{wang2010admission}\cite{southwell2014quality}, and security \cite{clancy2008security}.

Among the aspects mentioned above, QoS guarantee and data pricing are particularly interesting to us. One reason is that QoS is often hard to guarantee in CRNs, mainly due to spectrum uncertainty under a CRN setting \cite{6133750}\cite{pan2014spectrum}\cite{DBLP:journals/corr/NadendlaBV14}. For example, the uncertainty in available spectrum due to interfering users (or even primary users) can result in situations  where the service cannot be guaranteed for the time period required by the user. When the SP opportunistically acquires available spectrum by performing spectrum sensing herself \cite{pan2014spectrum}, a mis-detection will cause the user to experience large noise and low service guarantee when accessing the channel. In fact, the result of a Federal Communications Commission (FCC) survey, which aims to provide the users information on the service qualities of offerings by different SPs when making their decisions to purchase, has shown that the advertised transmission rate (which affects the QoS) was not 100\% guaranteed even in the broadband internet \cite{FCC1}\cite{FCC2}. The above issues naturally lead to the problem of data pricing, since the guarantee of the service quality contributes to the users' decision making process. Furthermore, research has shown that a user's subjective perceptions of the service quality often deviates from the actual service quality \cite{kim2010study}\cite{hossfeld2011quantification}\cite{schatz2011poor}. This indicates that pricing should not be entirely based on the QoS without taking the users' subjective perceptions of the service into consideration.

An even more important reason that motivates this work is that we believe that end-user behavior plays an important role under a CRN setting, and many algorithms designed for CRN can be potentially impacted by those behaviors. Examples include situations where a secondary user needs to decide whether or not to access spectrum based on the uncertainty in the spectrum sensing performed. Alternately, when a primary user chooses to lease her unoccupied spectrum to secondary users by algorithms based on non-cooperative games or auction mechanisms, the secondary users have to decide on whether or not to lease the spectrum, and how much to pay for it given the uncertainty surrounding the QoS. The above scenarios demand an understanding and accurate modeling of an end-user's decision making process, so that the primary users, when leasing their spectrum resources, can more accurately evaluate their expected outcomes.

This leads to the basic structure of our work. We investigate a secondary system, where an SP acquires bandwidth from primary users, and sells broadband internet service to end-users. In particular, we assume that the service cannot be fully guaranteed, and we model the uncertainty involved in the guarantee of the service with the probability that the service quality actually meets the advertised service quality and assume that this piece of information is available to the end-users when they make decisions, an idea inspired by \cite{FCC1}\cite{FCC2}. Next, we model the impact of end-user's decision making process using Prospect Theory (PT) \cite{kahneman1979prospect}, a Nobel-Prize-winning theory that is particularly successful in modeling and explaining how people's decisions under risks and uncertainty deviate from the framework of Expected Utility Theory (EUT) \cite{von2007theory}. We study the impact of the end-users' decision making process on the profit and radio resource management of the SP, when they have a skewed view on the service guarantee. To combat this impact, we propose prospect pricing, which focuses on possible strategies for the SP including bandwidth reallocation, rate control, bandwidth expansion/reduction, and admission control, all of which can be achieved under a CRN setting, and study their capabilities of recovering the revenue for the SP. Our results relate the SP's bandwidth resources with her ability to dynamically manage her radio resources so as to obtain the same amount of revenue as originally anticipated without considering the end-users' skewed perceptions. We show that under some conditions, the impact of the end-users' perception is large enough so that the SP simply cannot obtain the amount of revenue originally anticipated.

The rest of the paper is organized as follows. In Section II, we introduce the related work on data pricing, the background on PT, as well as the works that applies PT to wireless communications scenarios. In Section III, we model the interactions between the end-users and the SP as a Stackelberg game, while the conditions under which the existence of a pure strategy NE can be guaranteed are discussed in Section IV. In section V, we discuss the impact of the Probability Weighting Effect (PWE) on the end-user's decision making process, the revenue of the SP. Section VI discusses the prospect of recovering the revenue of the EUT game via prospect pricing. Numerical results are shown in Section VII while in section VIII, we discuss psychophysics experiments with human subjects of video QoS over wireless channels so as to model the parameter used to characterize the probability weighting effect.

\section{Related Work}
\subsection{Prospect Theory: a brief introduction}

The rationality assumption in game theory \cite{osborne1994course}, which states that a player's decision making process is often assumed to be completely following the axioms and theorems established in Expected Utility Theorey \cite{von2007theory}, has long been questioned by behavioral science \cite{einhorn1981behavioral}. Although EUT explains most of the people's decision making successfully, paradoxes have been observed in real life that contradict the predictions of EUT. Alternative theories explaining human's decision making processes were raised in the 1970s, with the most successful one being Prospect Theory \cite{kahneman1979prospect}, whose main differences with EUT are
\begin{enumerate}
\item Probability Weighting Effect (PWE): the weight of the payoff of each possible outcome is different from the probability of the occurrence of that outcome.
\item Framing Effect (FE): the payoff of each outcome is framed into either gain or loss relative to a reference point.
\end{enumerate}
These two features can be illustrated with a variation of the famous Allais's Paradox \cite{allais1953comportement}, which is also used in \cite{kahneman1979prospect}.

In the experiment, two problems were sequentially presented to a group of 100 participants. Each problem contains 2 alternatives. For the first problem, the participants were asked to choose between
\begin{itemize}
\item A: \$2500 with probability 0.33; \$2400 with probability 0.66; \$0 with probability 0.01;
\item B: \$2400 with certainty,
\end{itemize}
while in the second problem between
\begin{itemize}
\item C: \$2500 with probability 0.33; \$0 with probability 0.67;
\item D: \$2400 with probability 0.34; \$0 with probability 0.66.
\end{itemize}

According to EUT, the expected utility of each alternative can be calculated by taking the expectation of payoff amount for different outcomes, which, for an alternative with $M$ outcomes $o_1$ to $o_M$ and their corresponding occurring probabilities $o_1$ to $o_M$, can be computed with
\begin{align}
U_{EUT}=\sum_{i=1}^{M}v_{EUT}(o_i)p_i.
\end{align}
It can be easily verified that
\begin{align}
U_{EUT}(A)&=2500\times 0.33+2400\times 0.66+0\times 0.01\notag\\&=2409>2400=U_{EUT}(B),
\end{align}
while
\begin{align}
U_{EUT}(C)&=2500\times 0.33=825<850\notag\\&=2500\times 0.34=U_{EUT}(D).
\end{align}
Thus, if the participants make their decisions following the prediction of EUT, i.e., choosing the alternative that maximizes the expected utility, then the participants should prefer A to B in problem 1 and D to C in problem 2. However, the result shows that the majority (82\%) of the participants chose B in problem 1 and the majority (83\%) of the participants chose C in problem 2. 

The results violate the predictions of EUT, but under the framework of PT, they can be well explained. Under PT, people are assumed to choose the alternative that maximizes the prospect, which can be computed with
\begin{align}
U_{PT}=\sum_{i=1}^{M} v_{PT}(o_i)w(p_i).
\end{align}
The definition of prospect is very similar to the definition of the expected utility, except that $p_i$ is weighted by an inverse-S-shaped Probability Weighting Function (PWF) $w(\cdot)$, which characterizes the PWE analytically. In addition, $v_{EUT}(o_i)$ is replaced by $v_{PT}(o_i)$, which depicts the FE. Figure \ref{fig:pwf} illustrates the idea of PWE by Prelec's PWF. The PWF captures the feature that people often over-weight low probabilities and under-weight moderate and high probabilities. The value function captures the effect of loss aversion on people, i.e., the same amount of loss usually looms larger than the same amount of gain to a person.

\begin{figure}[!t]
  \centering
  \includegraphics[width=0.95\linewidth]{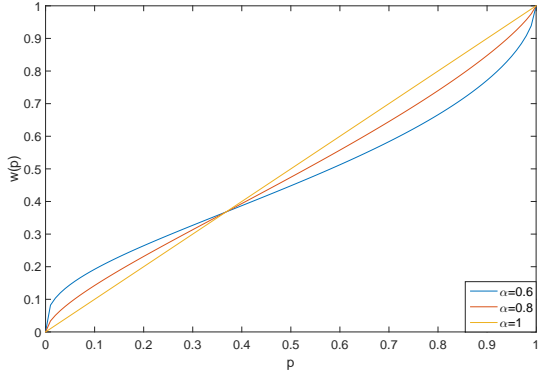}
  \caption{Prelec's probability weighting function with three different values $\alpha$ indicating different levels of skewness.}
  \label{fig:pwf}
\end{figure}

The result of the experiment can be explained immediately with the above setup. In problem 1, since alternative B provides a guaranteed payoff, that payoff becomes the reference point when framing the payoff of each outcome under the other alternative. Thus, \$2500 becomes a gain of \$100, while \$0 becomes a loss of \$2400. It can then be readily seen that if the probability 0.01 is over-weighted as depicted in Figure \ref{fig:pwf}, then most people would have indeed preferred B to A. The same argument applies to problem 2. In our work, we adopt Prelec's PWF, which is first proposed in \cite{Prelec}, and parametrized by $\alpha\in (0,1]$:
\begin{align}
w(p)=\exp\left\{-(-\ln p)^{\alpha})\right\}.
\end{align}
However, most of the conclusions in this work can be easily generalized to other inverse-S-shaped PWFs. We also report on psychophysics studies with human subjects to experimentally determine the value of $\alpha$ in section \ref{sec:psy}. Note that we do not consider the role of  FE in this paper and that is a topic for future study.

\subsection{Data pricing}

Pricing of wireless data has been widely studied for over a decade. Most of the work focuses on proposing mechanisms that offer control over the network's traffic while maximizing the revenue of the service provider. A comprehensive survey of the most typical strategies adopted by the SPs over the past years, offering either wired or wireless services, can be found in \cite{sen2012pricing}. Traditionally, the SPs use flat-rate pricing strategies as well as usage-based pricing strategies, which offer limited ability on managing network traffic. More complicated pricing strategies are adopted later on, for example Paris Metro pricing \cite{odlyzko1999paris}, time-of-day pricing \cite{ha2012tube}\cite{joe2011time}, and congestion level based pricing \cite{paschalidis2000congestion}. Those strategies are harder to implement, but offer better performances in managing the congestion level of the network, as well as higher service guarantee as they make some users back off when purchasing the service by making them aware of the actual cost of accessing the network when the congestion level is high by setting a higher price.

However, even with advanced pricing strategies, the uncertainty involved in the guarantee of the service cannot be avoided. In particular, in wireless communication, the uplink and downlink rates cannot be guaranteed due to noise and interference, which cannot be accurately predicted at the time the service is purchased. Thus, the end-users often have to make difficult choices between several alternatives of accessing the network, in which the service quality she gets is stochastic. 

Recently, there has been a category of work that study this particular type of decision making problem of the end-users with Prospect Theory, spanning a number of areas including communication networks \cite{PTRAGc,PTRAG,Spectrum,Tussle,Altruism,Adhoc,Antijamming}, and smart energy management \cite{micro}\cite{integrating}. The subject of pricing is addressed in \cite{PTRAGc}\cite{PTRAG}, and in our previous work \cite{CISS}\cite{Allerton}. In \cite{CISS}, we studied the same problem of this work under a more specific setting, i.e., assuming there exists only one end-user. We studied the conditions under which an NE exists, and found the NE that gives the SP maximum revenue. We then studied the case when the end-user follows the decision making process of PT, and showed that the SP cannot avoid revenue loss if she wants to retain the same NE or the same revenue under the PT game. 
In \cite{Allerton}, we generalized the framework to the multiuser setting.

\section{A Stackelberg Game Model}

\begin{table*}
\centering
\begin{tabular}{|p{3cm}|p{6cm}|p{3.5cm}|}
\hline Parameter & Meaning & Location \\\hline
$\{b,r_{EUT}(b),\vec{BW}_{EUT}\}$ & service offering & page 3  \\\hline
$BW_{max,EUT}$ & total bandwidth & page 4 \\\hline $B_i$ & actual rate (a random variable) of $i$-th user & page 4 \\\hline $\bar{F}_{B_i}(b;BW_{i,EUT})$ & service guarantee for the $i$-th user & equation (6) \\\hline
$h_i(b)$ & $i$-th user's benefit function & page 4 \\\hline $p_i$ & $i$-th user's probability of accepting the offer & page 4 \\\hline
$b_{1,EUT}^*$ & rate offered under Nash Equilibrium & equation (11) \\\hline $\bar{p}$ & average acceptance probability & Remark 1 \\\hline $\lambda_i$, $\lambda_{i,AD}$, $\lambda_{i,RC}$ & dummy variables & Theorem 2, Propositions 1,3 \\\hline $S_{EUT}$, $S_{PT}$ & The set of users to which the service is offered & equation (15) \\\hline $BW_{max,PT}$ & total bandwidth constrained under PT and EUT games & equation (15) \\\hline $r_{PT}(b)$ & price at rate $b$ under PT game & page 6 \\\hline $b^*_{1,PT}$ & rate under Nash Equilibrium in PT game & equation (15) \\\hline $\vec{BW}_{PT}$ & bandwidth allocation under PT game & equation (15)\\\hline Simulation parameters & & TABLE II, page 9\\\hline
\end{tabular}
\caption{Parameters involved and the locations of their definitions and introductions}
\label{tab:par}
\end{table*}

\begin{figure}[!t]
\centering
\includegraphics[width=\linewidth]{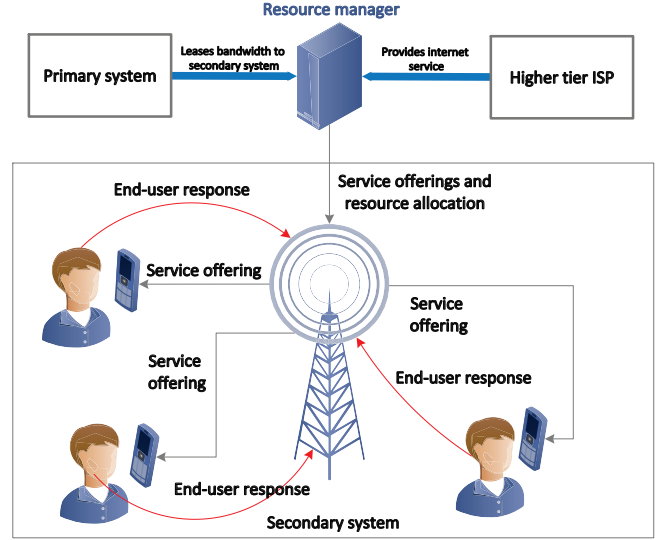}
\caption{System model}
\label{fig:system}
\end{figure}

We study the scenario under a CRN shown in Figure \ref{fig:system}, where the secondary system has a service manager, or SP, who actively manages and allocates available radio resources and sells service to $N$ end-users. The scenario where $N=1$ is a special case and is studied in \cite{CISS}. The bandwidth is assumed to be obtained from primary users by means of trading, an assumption that's frequently considered to maximize the bandwidth utilization \cite{jayaweera2009dynamic}\cite{mwangoka2011broker}. Meanwhile, the data is assumed to be obtained from the service offerings of higher tier ISPs. The interaction between the SP and the end-users is modeled into a Stackelberg game. The SP is aware of the number of users within its service range, and moves first by investing necessary resources, and making offers to the users. The end-users then decide whether or not to accept the service. The decisions are assumed to be made at the same time.

We define an offer made by the SP under the EUT game as a triple $\{b,r_{EUT}(b),\vec{BW}_{EUT}\}$, which corresponds to the rate $b$, the price of the service at that rate determined by a predefined pricing function $r_{EUT}(b)$, and a specific allocation of the SP's bandwidth denoted by $\vec{BW}_{EUT}=\{BW_{1,EUT},...,BW_{N,EUT}\}$, which satisfies the total bandwidth constraint $\vert\vec{BW}_{EUT}\vert=BW_{max,EUT}$. On the user side, we assume that the actual service rate for the $i$-th user is a random variable $B_i$, and the service guarantee at rate $b$ is a function that only depends on the channel between the user and the SP, the rate offered and the amount of bandwidth allocated to that user, and has the following form
\begin{align}
\bar{F}_{{B_i}}(b;BW_{i,EUT}):=\mathbb{P}({B_i}>b\vert BW_{i,EUT}).
\end{align}
For a fixed rate $b$, increasing $BW_{i,EUT}$ raises the service guarantee.

Denote the $i$-th user's benefit upon receiving guaranteed service at rate $b$ with $h_i(b)$. Since the SP offers constant rate, we can see that if the user accepts the offer, she pays a price $r_{EUT}(b)$, and with probability $\bar{F}_{B_i}(b;BW_{i,EUT})$ she receives successful service, and with probability $F_{B_i}(b;BW_{i,EUT}):=1-\bar{F}_{B_i}(b;BW_{i,EUT})$ the channel cannot successfully deliver the service at rate $b$ and the user experiences an outage. Hence, denoting the acceptance probability of the $i$-th user as $p_i$, the expected utility of the $i$-th user can then be represented as
\begin{align}
\label{eq:Uuser}
&U_{user,i}(p_i,b,BW_{i,EUT})=p_i[-r_{EUT}(b)+\notag\\+&h_i(b)\bar{F}_{B_i}(b;BW_{i,EUT})+h_i(0)F_{B_i}(b;BW_{i,EUT})]\notag\\+&(1-p_i)h_i(0).
\end{align}
As a natural assumption, we assume that $h_i(0)=0$ for all users. Thus, $U_{user,i}(p_i,b,BW_{i,EUT})=p_i[-r_{EUT}(b)+h_i(b)\bar{F}_{B_i}(b;BW_{i,EUT})]$. Note that the above model of the user's utility function is a special case of a more general scenario where the SP constantly adapts her transmission rate according to the channel's capacity. Under this general setting, $U_{user,i}(p_i,b,BW_{i,EUT})=p_i[\mathbb{E}[h_{i}(B_i)]-r_{EUT}(b)]+(1-p_i)h_{i}(0)$.
This general form of the user's expected utility reduces to (\ref{eq:Uuser}) when the user's benefit function is a step function, i.e., $h_{i}(B_i)=h_i(b)>0$ for $B_i>b$ and $h_i(0)$ for $B_i<b$ with $b$ being the advertised rate by the SP, which is equivalent to assuming that the user is insensitive to the actual service rate higher than advertised, and is extremely sensitive when the service delivered is below rate $b$. More general cases involving more complicated form of $h_i(B_i)$ can be studied, but involves tedious work on analyzing the properties of $\mathbbm{E}[h_i(B_i)]$ under the EUT game and PT game\footnote{The computation of expectation of a continuous random variable under the probability weighting effect can be dealt with the help from \cite{tversky1992advances}\cite{rieger2008prospect}.}.

As for the SP, a cost $c_i(b,BW_{i,EUT})$ is incurred upon her when she makes an offer at rate $b$ to the $i$-th user. Specifically, we assume an affine cost function for each individual user
\begin{align}
c_i(b,BW_{i,EUT})=c_1b+c_3BW_{i,EUT},
\end{align}
since the SP invests in resources based on the number of users in its service range. $c_1$ and $c_3$ are the cost for unit data rate and bandwidth. The fixed cost for the SP is ignored. Hence, the expected utility of the SP can be expressed as
\begin{align}
&U_{SP}(\vec{p},b,\vec{BW}_{EUT})=\sum_{i=1}^{N}[p_i[r_{EUT}(b)-c_i(b,BW_{i,EUT})]\notag\\+&(1-p_i)(-c_i(b,BW_{i,EUT}))].
\end{align}

We place a few more natural assumptions on our model. Firstly, $r_{EUT}(b)$ and $h_i(b)$ are assumed to be monotonically increasing and concave. The service guarantee for each user is assumed to converge to 0 as the offered rate tends to $\infty$ under fixed bandwidth. Meanwhile, the service guarantee for a user is a monotonically increasing function with respect to the bandwidth allocated to that user.

Lastly, we summarize the parameters we use in Table \ref{tab:par}.

\section{Existence of multiple Nash Equilibria of the EUT game}

With the above settings, the conditions for the existence of an NE can be characterized. Consider two cases, with one involving only a single user, and the other involving multiple users. For simplicity, we dub the first case as a Single-User-Single-Provider (SUSP) game, and the second case as a Multiple-User-Single-Provider (MUSP) game.

\begin{theorem}[The existence of multiple Nash Equilibria (NE)] 
\label{theorem1}
Assuming that $\forall i$,
\begin{align}
r_{EUT}(b^*_{1,EUT})>c_i(b^*_{1,EUT},BW_{max,EUT}),
\end{align}
where
\begin{align}
b^*_{1,EUT}=\argmax_{b} (r_{EUT}(b)-c_i(b,BW_{max,EUT})),
\end{align}
then there exists a pure strategy NE for the MUSP game\footnote{We use the phrase ``pure strategy NE" to refer the NE where the users accept the service with probability 1. The case where the users decline the service offer is excluded from our context.} if and only if there exists a pure strategy NE for at least one of the SUSP game consisting of one of the $N$ users and the SP, under which the SP allocates the entire bandwidth she has to that user.
\end{theorem}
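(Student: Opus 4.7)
The plan is to prove the two implications of the equivalence in turn. For the sufficiency direction ($\Leftarrow$), I would start from a pure-strategy NE of the SUSP game for some user $i^*$, in which the SP allocates all of $BW_{max,EUT}$ to $i^*$ at a rate $b_*$ that $i^*$ accepts and that yields strictly positive SP profit. I would then construct a candidate MUSP strategy in which the SP offers rate $b_*$ with $BW_{i^*,EUT}=BW_{max,EUT}$ and $BW_{j,EUT}=0$ for $j\neq i^*$. User $i^*$'s acceptance carries over directly from SUSP, and the remaining users, receiving zero bandwidth, decline and are excluded under the footnote's convention. To confirm that this is an MUSP NE, I would verify that no SP deviation is strictly profitable: altering $b$ lowers $r_{EUT}(b)-c_1 b$ by the SUSP optimality of $b_*$, while any reallocation either strictly reduces $\bar{F}_{B_{i^*}}(b_*;\cdot)$ by monotonicity (risking $i^*$'s acceptance) or attempts to include a previously declining user.

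For the necessity direction ($\Rightarrow$), I would start with an MUSP pure-strategy NE $(b,\vec{BW}_{EUT})$ whose accepting set $S$ is nonempty, and pick any $i^*\in S$. The key observation is that monotonicity of the service guarantee in bandwidth gives $\bar{F}_{B_{i^*}}(b;BW_{i^*,EUT})\leq \bar{F}_{B_{i^*}}(b;BW_{max,EUT})$, so the MUSP acceptance inequality $h_{i^*}(b)\bar{F}_{B_{i^*}}(b;BW_{i^*,EUT})\geq r_{EUT}(b)$ propagates to the same inequality with full bandwidth. Hence $i^*$'s acceptance region in SUSP is nonempty. Combined with the theorem's assumption that $r_{EUT}(b^*_{1,EUT})>c_{i^*}(b^*_{1,EUT},BW_{max,EUT})$ and the concavity of $r_{EUT}(b)-c_1 b$, the SP's constrained maximization in SUSP over $i^*$'s acceptance region yields a rate at which $i^*$ accepts and the SP profits strictly, giving a SUSP pure-strategy NE.

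The main obstacle is the necessity direction, specifically coupling the MUSP acceptance data for $i^*$ with the SUSP profitability hypothesis. Monotonicity of $\bar{F}$ transfers the acceptance region, but one still must show that the SP's constrained optimum in SUSP lies strictly above cost. Here the theorem's assumption that $r_{EUT}(b^*_{1,EUT})$ exceeds $c_i(b^*_{1,EUT},BW_{max,EUT})$ for every $i$, together with concavity of the profit function, guarantees that the SP's unconstrained optimum is strictly profitable; the $i^*$-acceptance region in SUSP either contains $b^*_{1,EUT}$ or contains a nearby rate at which, by concavity, profit is still positive. Making this argument fully rigorous---especially handling the case where the witnessing MUSP NE uses rates far from $b^*_{1,EUT}$---is the main technical step.
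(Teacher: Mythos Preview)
The paper does not actually prove Theorem~\ref{theorem1} in the text; it simply states ``The detailed proof can be found in~\cite{Allerton}.'' So there is no in-paper argument to compare against, and your proposal has to stand on its own.

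Your overall decomposition---embed a SUSP equilibrium into MUSP for $(\Leftarrow)$, and project an MUSP equilibrium onto a single accepting user via monotonicity of $\bar F_{B_i}(\,\cdot\,;BW)$ in bandwidth for $(\Rightarrow)$---is the natural one and is essentially correct. One clarification in the sufficiency direction: you worry about SP deviations that reallocate bandwidth or enlarge the offered set, but with the affine cost $c_i(b,BW_i)=c_1b+c_3BW_i$ the SP's payoff depends on bandwidth only through the fixed total $BW_{max,EUT}$, and (under the simultaneous best-response reading, with $p_j=0$ for $j\neq i^*$ held fixed) adding users only adds cost without revenue. So neither class of deviation helps, and the embedded profile is indeed an NE.

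The ``main obstacle'' you flag in the necessity direction largely dissolves once you reuse the same cost observation on the MUSP side. In any pure-strategy NE with accepting set $S$, the SP's payoff is
\[
|S|\bigl(r_{EUT}(b)-c_1b\bigr)-c_3\,BW_{max,EUT},
\]
and since the SP must be best-responding in $b$ given the fixed acceptance vector $\vec p$, the equilibrium rate is the \emph{unconstrained} maximizer $b^*_{1,EUT}=\argmax_b\bigl(r_{EUT}(b)-c_1b\bigr)$, independently of $|S|$ (cf.\ Remark~1). Thus the witnessing MUSP equilibrium already sits at $b^*_{1,EUT}$; any $i^*\in S$ accepts there with $BW_{i^*,EUT}\le BW_{max,EUT}$, hence by monotonicity also with the full bandwidth, and the theorem's standing hypothesis $r_{EUT}(b^*_{1,EUT})>c_{i^*}(b^*_{1,EUT},BW_{max,EUT})$ supplies strict SUSP profitability at that very rate. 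Your concern about ``rates far from $b^*_{1,EUT}$'' therefore does not arise.
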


The detailed proof can be found in \cite{Allerton}.

\begin{remark}
It is worth pointing out that we do not consider mixed strategy NE in the MUSP game. This is because, assuming that under an NE the acceptance probability of the users is represented by $\vec{p}$, the offered rate is $b$, and the allocation of the bandwidth is represented by $\vec{BW}_{EUT}$, we have $U_{SP} (\vec{p},b,\vec{BW}_{EUT})=\sum_{i\in S_{EUT}} p_i(r_{EUT}(b)-c_i(b,BW_{i,EUT}))=\bar{p}\left\vert S_{EUT}\right\vert r_{EUT}(b)-\left\vert S_{EUT}\right\vert c_1b-c_3BW_{max,EUT}$, where $\bar{p}$ is the average acceptance probability of all the users within set $S_{EUT}$. In order to reach a mixed strategy NE, the SP must find a rate $b$ and a corresponding bandwidth allocation $\vec{BW}_{EUT}$ such that all the users are indifferent between accepting and denying the offer. However, the expression also shows that the users' acceptance probabilities represented by $\vec{p}$ only affect the SP's decisions through their average $\bar{p}$. Hence, for any combinations of offered rate and bandwidth allocation that induce a mixed strategy NE, the acceptance probabilities of the users can be arbitrary as long as the average acceptance probability remains fixed and the SP cannot obtain a higher revenue through offering the service to a subset of $S_{EUT}$. Hence, the SP does not have control over the individual user's acceptance probability under a mixed strategy NE.
\end{remark}

We next specifies a procedure with which the SP finds the strategy that leads to the revenue-maximizing NE. This strategy includes a service offering, and the corresponding bandwidth allocation. We define
the minimum amount of bandwidth that can be allocated to user $i$ at rate $b$ to be
\begin{align*}
BW_i(b)=\bar{F}^{-1}_{B_i}\left(\frac{r_{EUT}(b)}{h_i(b)},b\right),
\end{align*}
which is equivalent to the amount of bandwidth that satisfies
\begin{align*}
\bar{F}_{B_i}(b;BW_i(b))=\frac{r_{EUT}(b)}{h_i(b)},
\end{align*}
and we assume that the SP knows this piece of information for all users. The procedure is specified in algorithm \ref{alg:NE}.

In algorithm \ref{alg:NE}, the SP first categorizes all pure strategy NE according to the number of users that accepts the offer. For an NE where $n$ users accept the offer, the SP goes on to find the rate such that when $n$ users accept the offer, the revenue is maximized while the minimum bandwidth needed to support the service to the selected users satisfies the total bandwidth constraint. However, if there exists an $S$ with a larger size, then the strategy will not lead to an NE. Hence we set the revenue to 0 so that it will not be selected. The revenues under different choices of $n$ are then compared, and the revenue maximizing NE is located for a specific $n^*$. Lastly, the SP selects a possible choice of $S$ with size $n^*$, and allocate slightly more than the minimum bandwidth needed for each user so that they accept the offer with probability 1. Finally, note that under the revenue-maximizing $n^*$, the choice of $S$ is unique, which consists of the $n^*$ users with the lowest values of $BW_i(b^*_{1,EUT})$.

\begin{algorithm}[H]
\caption{Locating the revenue-maximizing NE}
\label{alg:NE}
    \begin{algorithmic}[1]
        \STATE {\it Input:} $S_{EUT}$, $r_{EUT}(b)$, $BW_{max,EUT}$, and for all $i$'s $h_i(b)$ and $BW_i(b)$.
        \STATE {\it Output:} The revenue maximizing strategy $(b^*_{1,EUT},r_{EUT}(b^*_{1,EUT}),\vec{BW}_{EUT})$.
        \FOR {$n=|S_{EUT}|$ to 1}
            \STATE {$b^*_{1,EUT}[n]\leftarrow \argmax_{b} nr_{EUT}(b)-nc_1b$, s.t., $\min_{S\subseteq S_{EUT},|S|=n}\sum_{j\in S}BW_j(b)< BW_{max,EUT}$.}
            \STATE {$U_{SP,EUT}[n]\leftarrow nr_{EUT}(b^*_{1,EUT}[n])-nc_1b^*_{1,EUT}[n]-c_3BW_{max,EUT}$}
            \IF {$\exists S\subseteq S_{EUT}, |S|>n$, such that $\sum_{j\in S}BW_j(b^*_{1,EUT})<BW_{max,EUT}$}
                \STATE {$U_{SP,EUT}[n]\leftarrow 0$}
            \ENDIF
        \ENDFOR
        \STATE {$n^*\leftarrow\argmax_{n}U_{SP,EUT}[n]$}
        \STATE {$b^*_{1,EUT}\leftarrow b^*_{1,EUT}[n^*]$}
        \STATE {$S^*\leftarrow \arg_{S,|S|=n}\sum_{j\in S}BW_j(b^*_{1,EUT})< BW_{max,EUT}$}
        \STATE {$BW_{i}\leftarrow BW_{i}(b^*_{1,EUT})+\epsilon$ if $i\in S$, and $BW_{i}\leftarrow 0$ otherwise}
    \end{algorithmic}
\end{algorithm}

\section{The impact of Prospect Theory on end-user decisions}

In the remainder of this paper, we consider the impact of Prospect Theory on end-users' decisions of whether or not to accept a service offer, its impact on the radio resources and the revenue of the SP. In particular, we focus on the effect of end-user's weighting of the service guarantee, i.e., the PWE aspect of PT. We shall see that, when the end-users under-weight the service guarantee, they tend to reject the offer, which leads to an under-utilization of the SP's radio resources and a loss in revenue.

For the MUSP game, we study the condition under which the system is robust to the PWE in the sense of retaining all the users without having to change the service offer. The result is summarized as follows.

\begin{theorem}
\label{theorem3}
If all the users under-weight the service guarantee, and the same offer inducing the pure strategy NE under the EUT game is offered to the same set of users, then the NE is preserved under PWE if and only if $\forall i\in S_{EUT}$,
\begin{align}
\label{sufficient}
BW_{i,EUT}>\bar{F}^{-1}_{B_i}\left(\lambda_i,b^*_{1,EUT}\right),
\end{align}
where
\begin{align}
\lambda_i = w^{-1}\left(\bar{F}_{B_i}\left(b^*_{1,EUT};BW_i(b^*_{1,EUT})\right)\right).
\end{align}
\end{theorem}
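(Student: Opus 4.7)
The plan is to exploit the fact that switching from EUT to PT changes only the user-side utility: the SP still collects real payments weighted by the objective acceptance decisions, so the same offer $(b^*_{1,EUT}, r_{EUT}(b^*_{1,EUT}), \vec{BW}_{EUT})$ is an NE under PT if and only if every user still has $p_i = 1$ as a strict best response and the SP has no profitable deviation. Both pieces reduce to inequalities involving the weighted service guarantee.

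First I would write user $i$'s PT utility, using $h_i(0) = 0$ and $w(1) = 1$, as $U_{user,i}^{PT}(p_i, b, BW_{i,EUT}) = p_i[-r_{EUT}(b) + h_i(b)\, w(\bar{F}_{B_i}(b; BW_{i,EUT}))]$. Evaluating at the EUT-NE offer, user $i$ strictly prefers accepting to rejecting iff $w(\bar{F}_{B_i}(b^*_{1,EUT}; BW_{i,EUT})) > r_{EUT}(b^*_{1,EUT})/h_i(b^*_{1,EUT})$. The defining property of $BW_i(b)$ used in Algorithm \ref{alg:NE}, namely $\bar{F}_{B_i}(b; BW_i(b)) = r_{EUT}(b)/h_i(b)$, lets me rewrite the right-hand side as $\bar{F}_{B_i}(b^*_{1,EUT}; BW_i(b^*_{1,EUT}))$; applying $w^{-1}$ (monotone increasing) to both sides gives $\bar{F}_{B_i}(b^*_{1,EUT}; BW_{i,EUT}) > \lambda_i$, and inverting once more in the bandwidth argument, which is legitimate because $\bar{F}_{B_i}(b; \cdot)$ is monotone in $BW$ by assumption, yields exactly \eqref{sufficient}. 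Taking the conjunction over $i \in S_{EUT}$ gives the user-side characterization.

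Next I would verify the SP has no profitable deviation. Because users under-weight, $w(p) < p$ on the relevant range, so $w^{-1}(x) > x$ and the PT acceptance constraint $\bar{F}_{B_i}(b; BW_i) \geq w^{-1}(r_{EUT}(b)/h_i(b))$ is pointwise stricter than the EUT analogue $\bar{F}_{B_i}(b; BW_i) \geq r_{EUT}(b)/h_i(b)$. Hence the SP's set of PT-feasible strategies, namely those that induce every targeted user to accept, is contained in her set of EUT-feasible strategies. Since Theorem \ref{theorem1} and Algorithm \ref{alg:NE} certified the EUT-NE offer as the revenue maximizer over the larger EUT-feasible set, no PT-feasible deviation can yield strictly higher revenue, so once \eqref{sufficient} holds for every $i$ the SP has no incentive to deviate. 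Conversely, if \eqref{sufficient} fails for some $i \in S_{EUT}$, that user's strict best response becomes $p_i = 0$ and the proposed profile ceases to be a pure-strategy NE, giving the ``only if'' direction.

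The main obstacle I anticipate is keeping the two successive monotone inversions, first through $w$ and then through $\bar{F}_{B_i}(b; \cdot)$, straight so that the hypothesis ``users under-weight the service guarantee'' enters with the correct orientation, and handling the strict inequality carefully: \eqref{sufficient} delivers strict preference for acceptance, whereas a weak inequality would leave $p_i$ indeterminate under indifference and could destabilize the pure-strategy NE. The SP-side step is then a short consequence of the set-inclusion argument rather than a fresh optimization problem.
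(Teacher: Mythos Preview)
Your user-side argument is essentially identical to the paper's: write the PT acceptance condition $h_i(b^*_{1,EUT})\,w(\bar F_{B_i}(b^*_{1,EUT};BW_{i,EUT}))>r_{EUT}(b^*_{1,EUT})$, substitute $r_{EUT}(b^*_{1,EUT})/h_i(b^*_{1,EUT})=\bar F_{B_i}(b^*_{1,EUT};BW_i(b^*_{1,EUT}))$, and invert through the monotone maps $w$ and $\bar F_{B_i}(b;\cdot)$ to obtain \eqref{sufficient}. That is the entirety of the paper's proof.

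The difference is scope. The paper reads ``the NE is preserved'' narrowly: since the hypothesis fixes the SP's offer, only the users' best responses need checking, and the proof stops there. You instead treat ``NE preserved'' as requiring that the SP also have no profitable deviation, and you supply a set-inclusion argument (under-weighting shrinks the SP's feasible set, so optimality over the larger EUT set carries over). This is reasonable extra rigor, but note one slip: you invoke Algorithm~\ref{alg:NE} to claim the EUT offer is the \emph{revenue-maximizing} NE, whereas the theorem statement only assumes it is \emph{some} pure-strategy NE. Your set-inclusion argument still goes through without Algorithm~\ref{alg:NE}, since any EUT NE already certifies that the SP has no profitable deviation over the larger EUT-feasible set; just drop the reference to the algorithm and argue directly from the NE property.
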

\begin{proof}
For the $i$-th user, the necessary and sufficient condition for him to accept an offer at rate $b^*_{1,EUT}$ and price $r_{EUT}(b^*_{1,EUT}$ under the impact of PWE is $h_i(b^*_{1,EUT})w(\bar{F}_{B_i}(b^*_{1,EUT};BW_{i,EUT}))>r_{EUT}(b^*_{1,EUT})$, i.e., $w(\bar{F}_{B_i}(b^*_{1,EUT};BW_{i,EUT}))>\bar{F}_{B_i}(b^*_{1,EUT};BW_i(b^*_{1,EUT}))$. Since the probability weighting function is monotonically increasing, and thus have an inverse, we have $BW_{i,EUT}>\bar{F}^{-1}_{B_i}\left(w^{-1}\left(\bar{F}_{B_i}\left(b^*_{1,EUT};BW_i(b^*_{1,EUT})\right)\right),b^*_{1,EUT}\right)$.
\end{proof}

The above result indicates that, in order to retain all the users without changing the service offer, the total amount of bandwidth of the SP must be ``sufficient":
\begin{align}
\label{sufficienttotal}
BW_{max,EUT}&=\sum_{i\in S_{EUT}} BW_{i,EUT}\notag\\&>\sum_{i\in S_{EUT}}\bar{F}^{-1}_{B_i}\left(\lambda_i,b^*_{1,EUT}\right).
\end{align}
When $\alpha=1$, $w(p)=p$, and the PT game reduces to EUT game. As $\alpha$ decreases, $w^{-1}(p)$ increases for every fixed $p$ that satisfies $w(p)<p$, and hence the right hand side of the above inequality increases, indicating that when PWE is introduced and the users under-weight the service guarantee, the SP must invest in more bandwidth than the amount required under the EUT game in order to retain all the users with the same offer.

\section{Prospect Pricing}

In this section, we introduce the idea of prospect pricing to make the system robust against the PWE experienced by the users. For the MUSP game, the SP needs to perform prospect pricing by setting a new price $r_{PT}(b)$ at the offered rate $b$ when the bandwidth of the system does not satisfy the condition specified in equation (\ref{sufficienttotal}). The goal of prospect pricing consists the following two aspects.

\begin{itemize}
\item Retain the Radio Resource Management (RRM) constraints when PWE is introduced. The RRM constraints for the MUSP game are defined as follows
\begin{align}
\left\{\begin{array}{l}
S_{EUT}=S_{PT},\\
b^*_{1,EUT}=b^*_{1,PT},\\
BW_{max,EUT}=BW_{max,PT},\\
\vec{BW}_{EUT}=\vec{BW}_{PT}\end{array}\right..
\end{align}

The constraints restrict the SP to offer service of the same rate to the same set of users when PWE is introduced. They also restrict the SP to allocate the same amount of bandwidth to each user within the set.

\item Retain the revenue of the EUT game when the end-users under-weight the service guarantee.
\end{itemize}

We first show that, in the MUSP game the SP cannot retain her revenue and the RRM constraints simultaneously, provided that all the users under-weight the service guarantee, i.e., the SP cannot strictly retain all RRM constraints without suffering a revenue loss. We then show that by partially relaxing the RRM constraints, it is possible for the SP to retain her revenue under the EUT game.

\begin{theorem}
\label{theorem4}
When (\ref{sufficienttotal}) is not satisfied, and when all the users under-weight the service guarantee, Prospect Pricing can be used to retain strict RRM constraints, at the cost of the SP losing revenue of at least
\begin{align*}
L_{RRM}&=\max_{i\in S_{EUT}}\{r_{EUT}(b^*_{1,EUT})-\\&-h_i(b^*_{1,EUT})w(\bar{F}_{B_i}(b^*_{1,EUT};BW_{i,EUT}))\}.
\end{align*}
\end{theorem}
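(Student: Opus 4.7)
The plan is to observe that under the strict RRM constraints, the only free parameter available to the SP is the price $r_{PT}(b^*_{1,EUT})$; the offered rate, the served set, and the per-user bandwidth allocation are all pinned down. I would first translate the ``user $i$ accepts the PT offer'' condition from the PT version of the utility in equation (\ref{eq:Uuser}) (with weighting $w(\cdot)$ applied to $\bar{F}_{B_i}$, and setting $p_i=1$) into the inequality
\begin{align*}
h_i(b^*_{1,EUT})\,w\!\left(\bar{F}_{B_i}(b^*_{1,EUT};BW_{i,EUT})\right)\;\geq\;r_{PT}(b^*_{1,EUT}).
\end{align*}

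Next, because (\ref{sufficienttotal}) fails by hypothesis, Theorem \ref{theorem3} implies that for at least one $i\in S_{EUT}$ the old price $r_{EUT}(b^*_{1,EUT})$ already violates this acceptance inequality, so the SP must reduce the price to keep the set $S_{PT}=S_{EUT}$. Revenue at fixed cost is monotone increasing in $r_{PT}$, so the best the SP can do while satisfying every user's acceptance inequality is
\begin{align*}
r_{PT}(b^*_{1,EUT}) \;=\; \min_{i\in S_{EUT}} h_i(b^*_{1,EUT})\,w\!\left(\bar{F}_{B_i}(b^*_{1,EUT};BW_{i,EUT})\right);
\end{align*}
any strictly larger price causes the binding user to defect, violating the RRM constraint $S_{EUT}=S_{PT}$.

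I would then compare $U_{SP}$ under EUT and PT pricing. Since the RRM constraints preserve every cost term $c_1 b^*_{1,EUT}$ and $c_3 BW_{i,EUT}$, the revenue gap is $|S_{EUT}|$ times the per-user price drop
\begin{align*}
r_{EUT}(b^*_{1,EUT}) - r_{PT}(b^*_{1,EUT}) \;=\; \max_{i\in S_{EUT}}\!\left\{r_{EUT}(b^*_{1,EUT}) - h_i(b^*_{1,EUT})w\!\left(\bar{F}_{B_i}(b^*_{1,EUT};BW_{i,EUT})\right)\right\},
\end{align*}
which is exactly $L_{RRM}$. Since $|S_{EUT}|\geq 1$, the total loss is at least $L_{RRM}$, giving the stated bound.

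The main obstacle I expect is conceptual rather than computational: I need to argue cleanly that this particular $r_{PT}$ is indeed the revenue-maximizing choice consistent with the strict RRM constraints (so that the lower bound is tight per-user and not pessimistic), and to justify that the theorem's ``at least'' phrasing corresponds to the $|S_{EUT}|$ factor being suppressed in the stated bound. A small subtlety is handling ties at the $\min$/$\max$ and confirming that the minimizing user is identical to the maximizing user in the two displays above, which follows immediately because $r_{EUT}(b^*_{1,EUT})$ is a constant shift across $i$.
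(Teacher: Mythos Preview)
Your proposal is correct and follows essentially the same argument the paper uses (the paper defers the formal proof to \cite{Allerton}, but the first paragraph of the proof of the bandwidth-reallocation theorem immediately following Theorem~\ref{theorem4} reproduces exactly this reasoning): under strict RRM the price is the only free variable, every user's PT acceptance condition forces $r_{PT}(b^*_{1,EUT})<\min_{i\in S_{EUT}} h_i(b^*_{1,EUT})w(\bar{F}_{B_i}(b^*_{1,EUT};BW_{i,EUT}))$, and the per-user price drop $r_{EUT}-\min_i\{\cdot\}=\max_i\{r_{EUT}-\cdot\}=L_{RRM}$. Your use of Theorem~\ref{theorem3} to convert the failure of (\ref{sufficienttotal}) into the existence of a binding user, and your explicit handling of the $|S_{EUT}|$ factor behind the ``at least'' phrasing, are slightly more careful than the paper's sketch but add no new ideas.
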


The detailed proof can be found in \cite{Allerton}.

Next, we discuss four other strategies that can be used by the SP along with prospect pricing. These include
\begin{itemize}
\item {\bf\emph{Bandwidth reallocation}}: the SP reallocates the available unoccupied bandwidth among the users. In a CRN, the SP is capable of performing this since she needs to reallocate her bandwidth whenever a channel allocated to a secondary user is occupied by a primary user.
\item {\bf\emph{Admission control}}: the SP offers the service to a set of users $S_{PT}$ which is a subset of $S_{EUT}$.
\item {\bf\emph{Bandwidth expansion/reduction}}: the SP invests in a different amount of bandwidth $BW_{max,PT}$. This can be achieved when the spectrum of the SP is leased from primary users, which has been a commonly adopted assumption \cite{pan2014spectrum}.
\item {\bf\emph{Rate control}}: the SP offers a different rate $b^*_{1,PT}$ to the users, similar to rate adaptation often used in CRN algorithms.
\end{itemize}
Note that, except for bandwidth expansion/reduction, the other strategies requires maintaining the total bandwidth constraint, i.e., $BW_{max,EUT}=BW_{max,PT}$. The allocation of the bandwidth among the end-users, however, can be arbitrary. When (\ref{sufficienttotal}) is not satisfied, we want to find out whether the above four strategies, when applied together with prospect pricing, could help the SP retain the revenue she would get if the users follows decision making process of under the EUT framework. The results are described below.

\subsection{Bandwidth reallocation}

In bandwidth reallocation, the SP has the freedom to change the amount of bandwidth allocated to each user, subject to the total bandwidth constraints. The rate offered must also be the same. 

\begin{theorem}
With bandwidth reallocation, the revenue loss can be reduced, but not fully recovered.
\end{theorem}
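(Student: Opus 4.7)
I would split the theorem into its two assertions and prove each one separately, using the characterization that under bandwidth reallocation, with fixed rate $b^*_{1,EUT}$, user set $S_{EUT}$, and total bandwidth $BW_{max,EUT}$, the maximal sustainable prospect price is $r_{PT}(b^*_{1,EUT}) = \min_{i\in S_{EUT}} h_i(b^*_{1,EUT}) w(\bar{F}_{B_i}(b^*_{1,EUT};BW_{i,PT}))$, with associated revenue loss $|S_{EUT}|[r_{EUT}(b^*_{1,EUT}) - r_{PT}(b^*_{1,EUT})]$, since the cost terms in $U_{SP}$ are identical under these constraints. Both claims then reduce to statements about whether this minimum can be lifted strictly above $r_{EUT}(b^*_{1,EUT}) - L_{RRM}$ and all the way back up to $r_{EUT}(b^*_{1,EUT})$, respectively.

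For the first claim (loss can be reduced), I would show that $\vec{BW}_{EUT}$ is not a maximizer of the above minimum. Let $i^*$ attain the maximum in the definition of $L_{RRM}$ from Theorem \ref{theorem4}; then $i^*$ is the binding user under $\vec{BW}_{EUT}$, and every other $i \in S_{EUT}$ satisfies $h_i(b^*_{1,EUT}) w(\bar{F}_{B_i}(b^*_{1,EUT};BW_{i,EUT})) \geq r_{EUT}(b^*_{1,EUT}) - L_{RRM}$, with strict inequality whenever $i$ is not tied with $i^*$. Transferring a small amount $\delta > 0$ of bandwidth from any strictly non-binding user to $i^*$ strictly increases $i^*$'s weighted guarantee, by the monotonicity of $\bar{F}_{B_i}$ in allocated bandwidth and of $w$; meanwhile, by continuity the donor remains above the old binding level for small enough $\delta$. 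Iterating such transfers strictly lifts the minimum above $r_{EUT}(b^*_{1,EUT}) - L_{RRM}$, producing a revenue loss strictly smaller than $|S_{EUT}| L_{RRM}$.

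For the second claim (no full recovery), I would argue by contradiction. If some feasible $\vec{BW}_{PT}$ with $|\vec{BW}_{PT}| = BW_{max,EUT}$ sustained $r_{PT}(b^*_{1,EUT}) = r_{EUT}(b^*_{1,EUT})$ and kept all users in $S_{EUT}$ accepting, then for each $i$, $h_i(b^*_{1,EUT}) w(\bar{F}_{B_i}(b^*_{1,EUT};BW_{i,PT})) \geq r_{EUT}(b^*_{1,EUT})$, equivalently $BW_{i,PT} \geq \bar{F}^{-1}_{B_i}(\lambda_i, b^*_{1,EUT})$, using the monotonicity of $w$ and $\bar{F}_{B_i}$ and the definition of $\lambda_i$ from Theorem \ref{theorem3}. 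Summing over $i \in S_{EUT}$ yields $BW_{max,EUT} \geq \sum_{i\in S_{EUT}} \bar{F}^{-1}_{B_i}(\lambda_i, b^*_{1,EUT})$, which directly contradicts the hypothesis that (\ref{sufficienttotal}) fails. Hence $r_{PT}(b^*_{1,EUT}) < r_{EUT}(b^*_{1,EUT})$ strictly, producing a strictly positive revenue loss under any bandwidth reallocation.

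The main obstacle I expect is establishing the strict improvement in the first claim in degenerate configurations where several users are simultaneously binding under $\vec{BW}_{EUT}$: a single one-way transfer may not raise the minimum because the donor itself sits at the binding level. One then has to argue, using the concavity and monotonicity structure of $\bar{F}_{B_i}$ and $h_i$, that a joint perturbation spreading bandwidth from users with the flattest marginal response in $w(\bar{F}_{B_i}(b;\cdot))$ toward those with the steepest still strictly improves the minimum, or to classify the pathological cases in which equality with $L_{RRM}$ genuinely holds and no reduction is possible.
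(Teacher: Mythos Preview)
Your second claim (no full recovery) matches the paper's argument essentially line for line: both suppose a reallocation $\vec{BW}_{PT}$ sustains the EUT price, invert $w$ and $\bar F_{B_i}$ to obtain $BW_{i,PT}\ge \bar F^{-1}_{B_i}(\lambda_i,b^*_{1,EUT})$, and sum over $S_{EUT}$ to contradict the failure of (\ref{sufficienttotal}).

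For the first claim, however, you are proving more than the paper does, and in fact more than is true in general. The paper's argument is the trivial one: the strict-RRM allocation $\vec{BW}_{EUT}$ is a single feasible point in the set of allocations satisfying the total-bandwidth constraint, so the minimum loss $L_{BA}$ over all such allocations satisfies $L_{BA}\le L_{RRM}$, full stop. The paper then explicitly remarks, immediately after the proof, that ``the reduction of $L_{RRM}$ is 0 if the original bandwidth allocation under the EUT is the same as this optimal bandwidth allocation scheme.'' In other words, ``reduced'' is meant weakly. Your obstacle paragraph correctly diagnoses why a strict inequality cannot be had uniformly: when several (or all) users are simultaneously binding under $\vec{BW}_{EUT}$---which is exactly the case when $\vec{BW}_{EUT}$ already equalizes $h_i(b^*_{1,EUT})w(\bar F_{B_i}(b^*_{1,EUT};\cdot))$ across $S_{EUT}$---no one-way transfer raises the minimum, and any attempt to exploit curvature differences still runs into the fact that $\vec{BW}_{EUT}$ may already be the maximizer. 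So your transfer construction, while a reasonable heuristic, is not needed for the theorem as stated and cannot be pushed to a strict inequality without additional hypotheses ruling out this degenerate case.
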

\begin{proof}
In order to retain strict RRM constraints, all the users must accept the same offer containing the same rate and bandwidth, i.e., $\forall i\in S_{EUT}$, $r_{PT}(b^*_{1,EUT})<h_i(b^*_{1,EUT})w(\bar{F}_{B_i}(b^*_{1,EUT};BW_{i,EUT}))$. Hence, we have $r_{PT}(b^*_{1,EUT})<$ $\min_{i\in S_{EUT}} \{h_i(b^*_{1,EUT})w(\bar{F}_{B_i}(b^*_{1,EUT};$ $BW_{i,EUT}))\}$. However, $\min_{i\in S_{EUT}} \{h_i(b^*_{1,EUT})w(\bar{F}_{B_i}$ $(b^*_{1,EUT};BW_{i,EUT}))\}<r_{EUT}(b^*_{1,EUT})$. Hence, in order to retain strict RRM constraints, the SP must take a revenue loss of at least $L_{RRM}=r_{EUT}(b^*_{1,EUT})-\min_{i\in S_{EUT}}\{h_i(b^*_{1,EUT})w(\bar{F}_{B_i}(b^*_{1,EUT};BW_{i,EUT}))\}=\max_{i\in S_{EUT}}\{r_{EUT}(b^*_{1,EUT})-h_i(b^*_{1,EUT})w(\bar{F}_{B_i}(b^*_{1,EUT};$ $BW_{i,EUT}))\}$. Allowing reallocation of the bandwidth will reduce the revenue loss, since the revenue loss allowing bandwidth allocation $L_{BA}$ is the minimum revenue loss over all possible bandwidth allocation, and the bandwidth allocation under strict RRM constraints is only one instance.

We next show that allowing reallocation of the bandwidth cannot help the SP to fully recover the revenue by contradiction. Since the service is offered to the same set of users and the offered rate remains the same, the price must be the same in order to retain the revenue, i.e., $r_{EUT}(b^*_{1,EUT})=r_{PT}(b^*_{1,EUT})$. Assume that there exists a bandwidth allocation such that $\forall i\in S_{EUT}$, $
r_{PT}(b^*_{1,EUT})<h_i(b^*_{1,EUT})w(\bar{F}_{B_i}(b^*_{1,EUT};BW_{i,PT}))$. Then we must have $\forall i\in S_{EUT}$,
\begin{align}
BW_{i,PT}&>\bar{F}^{-1}_{B_i}\left(w^{-1}\left(\frac{r_{PT}(b^*_{1,EUT})}{h_i(b^*_{1,EUT})}\right);b^*_{1,EUT}\right)\notag\\&=\bar{F}^{-1}_{B_i}\left(\lambda_i;b^*_{1,EUT}\right).
\end{align}
Hence, the summation over the set $S_{EUT}$ yields the condition specified in (\ref{sufficienttotal}), contradicting the assumption that the bandwidth is insufficient in the first place.
\end{proof}

Note that the SP can acquire the optimal bandwidth allocation under the PT game by minimizing $L_{RRM}$ with respect to the amount of bandwidth allocated to each user. Hence the reduction of $L_{RRM}$ is 0 if the original bandwidth allocation under the EUT is the same as this optimal bandwidth allocation scheme.

Next, we explore other ways of relaxing the RRM constraints the SP can resort to in order to recover her revenue. We discuss a set of necessary and sufficient conditions under which the revenue can be recovered.

\subsection{Admission control}

The SP is allowed to violate the RRM constraints by selecting $S_{PT}\subset S_{EUT}$. Upon excluding one user, the SP is able to reallocate the bandwidth to other users to increase service performance.

\begin{proposition}
The necessary and sufficient condition for the SP to recover her revenue is to have sufficient bandwidth under the EUT game. Mathematically,
\begin{align}
BW_{max,EUT}&=\sum_{i\in S_{PT}} BW_{i,PT}\notag\\&>\min_{S_{PT}\subset S_{EUT}}\sum_{i\in S_{PT}}\bar{F}^{-1}_{B_i}\left(w^{-1}\left(\lambda_{i,AD}\right);b^*_{1,EUT}\right),
\end{align}
with
\begin{align}
\lambda_{i,AD}=\frac{\frac{|S_{EUT}|}{|S_{PT}|}r_{EUT}(b^*_{1,EUT})-\left(\frac{|S_{EUT}|}{|S_{PT}|}-1\right)c_1b^*_{1,EUT}}{h_i(b^*_{1,EUT})}.
\end{align}
\end{proposition}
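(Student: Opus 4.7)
The plan is to reduce revenue recovery to a pure bandwidth-feasibility question by first pinning down what the new price $r_{PT}(b^*_{1,EUT})$ must be and then re-running the per-user acceptance analysis used in the proof of Theorem \ref{theorem3}. Concretely, I would write the SP revenue at the EUT NE (read off from line 5 of Algorithm \ref{alg:NE}) as $|S_{EUT}|\,r_{EUT}(b^*_{1,EUT}) - |S_{EUT}|\,c_1 b^*_{1,EUT} - c_3 BW_{max,EUT}$, and compare it with the admission-control PT revenue $|S_{PT}|\,r_{PT}(b^*_{1,EUT}) - |S_{PT}|\,c_1 b^*_{1,EUT} - c_3 BW_{max,EUT}$, where the last term is unchanged because the total-bandwidth constraint $BW_{max,PT}=BW_{max,EUT}$ is retained. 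Equating the two and solving for $r_{PT}(b^*_{1,EUT})$ isolates the required new price; dividing by $h_i(b^*_{1,EUT})$ then produces exactly the ratio $\lambda_{i,AD}$ stated in the proposition.

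Next, I would push this price through the PWE acceptance criterion. User $i\in S_{PT}$ accepts iff
\begin{align*}
w\bigl(\bar{F}_{B_i}(b^*_{1,EUT};BW_{i,PT})\bigr) > \lambda_{i,AD},
\end{align*}
which, by monotonicity of $w$ and of $\bar{F}_{B_i}$ in its bandwidth argument, is equivalent to $BW_{i,PT} > \bar{F}^{-1}_{B_i}\bigl(w^{-1}(\lambda_{i,AD});b^*_{1,EUT}\bigr)$. Summing these per-user lower bounds over $S_{PT}$ and enforcing the budget $\sum_{i\in S_{PT}} BW_{i,PT} \le BW_{max,EUT}$ gives a feasibility inequality for each candidate subset; since admission control lets the SP pick any $S_{PT}\subset S_{EUT}$, taking the minimum over such subsets yields the stated condition. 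Sufficiency then follows by construction: if the minimization is satisfied, admit a minimizing $S_{PT}$, allocate the per-user minima plus the usual $\epsilon$-slack from Algorithm \ref{alg:NE}, and every admitted user accepts at $r_{PT}(b^*_{1,EUT})$, reproducing the EUT revenue. Necessity is immediate because any revenue-recovering strategy that retains $b^*_{1,EUT}$ and $BW_{max,EUT}$ forces the revenue identity that defines $\lambda_{i,AD}$, after which the acceptance bounds and the budget constraint cannot be circumvented.

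The main obstacle will be the bookkeeping rather than any deep argument: one has to verify that $\lambda_{i,AD}$ lies in the range of $w$ for each $i$ being considered (so $w^{-1}(\lambda_{i,AD})$ is well-defined), and that the strict acceptance inequalities can be realized with an arbitrarily small bandwidth slack without breaching the budget, justifying the use of ``$>$'' both inside and outside the minimization. A secondary subtlety worth addressing explicitly is excluding $S_{PT}=\emptyset$ from the minimization, since the empty admission produces revenue $-c_3 BW_{max,EUT}$ and trivially cannot match the EUT revenue; the minimization should therefore be understood over nonempty $S_{PT}\subset S_{EUT}$.
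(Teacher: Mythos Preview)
Your proposal is correct and follows essentially the same route as the paper: equate EUT and PT revenues (with $b$ and $BW_{max}$ fixed) to pin down $r_{PT}(b^*_{1,EUT})$, push this through the PWE acceptance condition to obtain per-user bandwidth lower bounds, sum over $S_{PT}$, and minimize over proper subsets. The only substantive addition in the paper's sufficiency argument that you do not mention is a closing check that $|S_{PT}|\bigl(r_{PT}(b^*_{1,PT})-c_1 b^*_{1,PT}\bigr)>c_3\sum_{i\in S_{PT}}BW_{i,PT}$, which follows because the left side minus the right side equals the (positive) EUT revenue; conversely, the well-definedness and $\epsilon$-slack issues you raise are glossed over in the paper.
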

\begin{proof}
We start by showing necessity. In order to retain revenue, we must have $|S_{EUT}|(r_{EUT}(b^*_{1,EUT})-c_1b^*_{1,EUT})-c_3BW_{max,EUT}=|S_{PT}|(r_{PT}(b^*_{1,PT})-c_1b^*_{1,PT})-c_3BW_{max,PT}$, where $b^*_{1,EUT}=b^*_{1,PT}$, and $BW_{max,EUT}=BW_{max,PT}$. Hence, $\exists r_{PT}(b^*_{1,PT})$ and $S_{PT}$ s.t., $|S_{EUT}|(r_{EUT}(b^*_{1,EUT})-c_1b^*_{1,EUT})=|S_{PT}|(r_{PT}(b^*_{1,PT})-c_1b^*_{1,PT})$, i.e., the form of the pricing function under the PT game at rate $b^*_{1,PT}$ is
\begin{align}
\label{eq:adctrl}
r_{PT}(b^*_{1,PT})&=\frac{|S_{EUT}|}{|S_{PT}|}r_{EUT}(b^*_{1,EUT})-\notag\\&-\left(\frac{|S_{EUT}|}{|S_{PT}|}-1\right)c_1b^*_{1,EUT}.
\end{align}
In order for the users to accept the offer, we must have $\forall i \in S_{PT}$,
\begin{align}
\label{accept}
r_{PT}(b^*_{1,PT})<h_i(b^*_{1,PT})w(\bar{F}_{B_i}(b^*_{1,PT};BW_{i,PT})),
\end{align}
i.e., $\forall i\in S_{PT}$, the amount of bandwidth under the PT game is lower bounded by
\begin{align}
BW_{i,PT}&>\bar{F}^{-1}_{B_i}\left(w^{-1}\left(\frac{r_{PT}(b^*_{1,PT})}{h_i(b^*_{1,PT})}\right);b^*_{1,PT}\right)\notag\\&=\bar{F}^{-1}_{B_i}\left(w^{-1}\left(\lambda_{i,AD}\right);b^*_{1,EUT}\right).
\end{align}
Since the total bandwidth is constrained to $BW_{max,EUT}$, we have
\begin{align}
BW_{max,EUT}&=\sum_{i\in S_{PT}} BW_{i,PT}\notag\\&>\sum_{i\in S_{PT}}\bar{F}^{-1}_{B_i}\left(w^{-1}\left(\lambda_{i,AD}\right);b^*_{1,EUT}\right),
\end{align}
which can be further lower bounded by taking the minimum over all $S_{PT}\subset S_{EUT}$.

We next prove sufficiency. Firstly, if the above condition is satisfied, then we must have a set $S_{PT}\subset S_{EUT}$ and a corresponding price under the NE $r_{PT}(b^*_{1,PT})$ such that $\forall i\in S_{PT}$, equation (\ref{accept}) holds. This is obvious, as we can simply choose the price as given by equation (\ref{eq:adctrl}) and allocate $BW_{i,PT}$ slightly higher than the minimum amount required.

Hence, the only thing left to show is that
\begin{align}
|S_{PT}|(r_{PT}(b^*_{1,PT})-c_1b^*_{1,PT})>c_3\sum_{i\in S_{PT}}BW_{i,PT}.
\end{align}

This is also true since the left hand side minus the right hand side is just the revenue of the SP under the EUT game. By assumption, this revenue must be positive.

\end{proof}

\subsection{Bandwidth Expansion/Reduction}

As suggested by the name, the SP is allowed to violate the RRM constraints such that $BW_{max,PT}\neq BW_{max,EUT}$. By doing this, the SP is also allowed to reallocate her bandwidth among the users. A set of necessary and sufficient conditions is given as follows.

\begin{proposition}
The necessary and sufficient condition for the SP to recover her revenue under the EUT game is that she has sufficient bandwidth under the EUT game. Mathematically,
\begin{align}
&BW_{max,EUT}>\frac{1}{c_3}\{|S_{EUT}|r_{EUT}(b^*_{1,EUT})-\sup_{BW_{PT}}\sup_{\vec{BW}_{PT}}\notag\\&\min_{i\in S_{EUT}} [|S_{EUT}|h_i(b^*_{1,EUT})w\left(\bar{F}_{B_i}(b^*_{1,EUT};BW_{i,PT})\right)-\notag\\&-c_3BW_{PT}]\},
\label{eq:bwexpthreshold}
\end{align}
where the outer supremum is to find the optimal total amount of bandwidth $BW_{PT}$ under the PT game. The inner supremum is to find the optimal bandwidth allocation subject to the constraint that the total amount of bandwidth under the PT game is $BW_{PT}$.
\end{proposition}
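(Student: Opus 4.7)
The plan is to mirror the structure used for the admission-control proposition. Since bandwidth expansion/reduction still insists on $S_{PT}=S_{EUT}$ and $b^*_{1,PT}=b^*_{1,EUT}$, only $BW_{max,PT}$ and the allocation $\vec{BW}_{PT}$ are free. I would first derive the PT price forced by revenue equality, then substitute it into the users' acceptance inequality, and finally optimize over the two free quantities.

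For necessity, start from $|S_{EUT}|(r_{EUT}(b^*_{1,EUT})-c_1 b^*_{1,EUT})-c_3 BW_{max,EUT} = |S_{PT}|(r_{PT}(b^*_{1,PT})-c_1 b^*_{1,PT})-c_3 BW_{max,PT}$. Because $|S_{EUT}|=|S_{PT}|$ and $b^*_{1,EUT}=b^*_{1,PT}$, this forces
\begin{align*}
r_{PT}(b^*_{1,EUT}) \;=\; r_{EUT}(b^*_{1,EUT}) \;-\; \tfrac{c_3}{|S_{EUT}|}\bigl(BW_{max,EUT}-BW_{max,PT}\bigr).
\end{align*}
For every user $i\in S_{EUT}$ to accept, I need $r_{PT}(b^*_{1,EUT}) < h_i(b^*_{1,EUT})\,w\bigl(\bar F_{B_i}(b^*_{1,EUT};BW_{i,PT})\bigr)$; taking the minimum over $i$ and substituting the forced $r_{PT}$ gives, after multiplying by $|S_{EUT}|$ and using the fact that $c_3 BW_{max,PT}$ is independent of $i$,
\begin{align*}
c_3 BW_{max,EUT} \;>\; |S_{EUT}|\,r_{EUT}(b^*_{1,EUT}) \;-\; \min_{i\in S_{EUT}}\!\bigl[|S_{EUT}|h_i(b^*_{1,EUT})\,w(\bar F_{B_i}(b^*_{1,EUT};BW_{i,PT})) \,-\, c_3 BW_{max,PT}\bigr].
\end{align*}
Since this must hold for some feasible $(BW_{max,PT},\vec{BW}_{PT})$ with $\sum_{i\in S_{EUT}} BW_{i,PT}=BW_{max,PT}$, the right-hand side can be made as small as possible by taking the two successive suprema appearing in \eqref{eq:bwexpthreshold}, which delivers the stated necessary condition.

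For sufficiency, I would reverse the argument: if \eqref{eq:bwexpthreshold} holds, then by definition of supremum there exist $BW_{max,PT}^{\circ}$ and an allocation $\vec{BW}_{PT}^{\circ}$ achieving the strict inequality. Choosing $r_{PT}(b^*_{1,EUT})$ as in the forced formula and perturbing each $BW_{i,PT}$ slightly above its minimum requirement ensures every user strictly prefers to accept, so the NE is preserved while the SP's revenue equals her EUT revenue by construction. The observation that the EUT-game revenue itself is positive (the $n^*$ selected by Algorithm \ref{alg:NE} makes $|S_{EUT}|(r_{EUT}-c_1 b^*_{1,EUT})>c_3 BW_{max,EUT}$) guarantees that the PT revenue after this reconstruction is also positive, so nothing pathological happens when we declare it an NE.

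The main obstacle I anticipate is purely bookkeeping: pulling the constant $c_3 BW_{max,PT}$ inside the $\min_i$ without changing its value, keeping track of whether the inequality is strict (it is, because acceptance is defined by a strict inequality in the Stackelberg model), and being careful that the supremum over $BW_{max,PT}$ is unconstrained above---one must check that if the optimum is attained at an arbitrarily large $BW_{PT}$ the condition degenerates correctly, and if it is attained at a finite value then a small perturbation argument suffices. Everything else follows the same template as Proposition 1.
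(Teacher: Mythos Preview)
Your proposal is correct and follows essentially the same route as the paper: derive the forced PT price from the revenue-equality constraint with $S_{PT}=S_{EUT}$ and $b^*_{1,PT}=b^*_{1,EUT}$, substitute into the per-user acceptance inequality, take the minimum over $i$, and then optimize over $(BW_{max,PT},\vec{BW}_{PT})$ for necessity, reversing the argument for sufficiency. The paper's sufficiency step is a one-liner (``choose the minimizing solution''), so your extra care about strict inequalities and the possibility that the supremum is attained only in a limit is more thorough than the original but not a different idea.
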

\begin{proof}
We start by showing necessity. In order to retain revenue, we must have $|S_{EUT}|(r_{EUT}(b^*_{1,EUT})-c_1b^*_{1,EUT})-c_3BW_{max,EUT}=|S_{PT}|(r_{PT}(b^*_{1,PT})-c_1b^*_{1,PT})-c_3BW_{max,PT}$, where $S_{EUT}=S_{PT}$, and $b^*_{1,EUT}=b^*_{1,PT}$. Hence, the form of the pricing function under the PT game at rate $b^*_{1,PT}$ is $r_{PT}(b^*_{1,EUT})=r_{EUT}(b^*_{1,EUT})+\frac{c_3}{|S_{EUT}|}(BW_{max,PT}-BW_{max,EUT})$. Once again, equation (\ref{accept}) must hold $\forall i\in S_{EUT}$ in order for the users to accept the offer, hence $\forall i\in S_{EUT}$,
\begin{align}
\label{eq:necbwexp}
r_{EUT}(b^*_{1,EUT})+\frac{c_3}{|S_{EUT}|}(BW_{max,PT}-BW_{max,EUT})\notag\\<h_i(b^*_{1,EUT})w(\bar{F}_{B_i}(b^*_{1,EUT};BW_{i,PT})).
\end{align}
This implies that the left hand side of the above equation must be smaller than the minimum of the right hand side with respect to $i$. It also implies that there exists a way of allocating bandwidth $BW_{max,PT}$ under the PT game such that $BW_{max,EUT}>\frac{1}{c_3}\{|S_{EUT}|r_{EUT}(b^*_{1,EUT})-\min_{i\in S_{EUT}}[|S_{EUT}|h_i(b^*_{1,EUT})w(\bar{F}_{B_i}(b^*_{1,EUT};BW_{i,PT}))-c_3BW_{max,PT}]\}$, which further implies (\ref{eq:bwexpthreshold}).

We next show sufficiency. Similar to the case of admission control, the form of $r_{PT}$ has been specified. Also, there exists a way of allocating the bandwidth among the users such that equation (\ref{eq:necbwexp}) holds $\forall i\in S_{PT}$. This can be achieved by simply choosing $w(\bar{F}_{B_i}(b^*_{1,EUT};BW_{i,PT}))$ to be the minimizing solution for the right hand side of (\ref{eq:bwexpthreshold}).

\end{proof}

\begin{remark}
We can obtain a result parallel to the above proposition, which bounds the maximum amount of allowed skewness of the PWE of the users given the bandwidth of the SP. From equation (\ref{eq:necbwexp}), we can equivalently have $h_i(b^*_{1,EUT})w(\bar{F}_{B_i}(b^*_{1,EUT};BW_{i,PT}))-\frac{c_3}{|S_{EUT}|}BW_{max,PT}>r_{EUT}(b^*_{1,EUT})-\frac{c_3}{|S_{EUT}|}BW_{max,EUT}$. Hence, for each $BW_{max,PT}$ we requires the existence of an allocation of the bandwidth under the PT game, such that we have $r_{EUT}(b^*_{1,EUT})-c_3BW_{max,EUT}<\sup_{\vec{BW}}\min_{i} [|S_{EUT}|h_i(b^*_{1,EUT})w(\bar{F}_{B_i}(b^*_{1,EUT};BW_{i,PT}))-c_3BW_{max,PT}]$, where $\vec{BW}$ subjects to the total bandwidth constraints under the PT game. For any $BW_{max,PT}$, the optimal way of maximizing the left hand side of the above inequality by allocating the bandwidth is to make the weighted guarantee the same for all the users. Hence, upon assuming this guarantee is $x$, the above relationship becomes $|S_{EUT}|h_i(b^*_{1,EUT})x-c_3\sum_{i\in S_{EUT}}\bar{F}_{B_i}^{-1}(w^{-1}(x);b^*_{1,EUT})>r_{EUT}(b^*_{1,EUT})-c_3BW_{max,EUT}$. Hence, in order for a valid $BW_{max,PT}$ to exist, we must have $c_3<\sup_{x}\frac{|S_{EUT}|h_i(b^*_{1,EUT})x-(r_{EUT}(b^*_{1,EUT})-c_3BW_{max,EUT})}{\sum_{i\in S_{EUT}}\bar{F}_{B_i}^{-1}(w^{-1}(x);b^*_{1,EUT})}$, where $x$ is constrained to the region where probability is under-weighted, i.e., $(e^{-1},1]$.

Note that, although the constraint is on $c_3$ (which appears on both sides of the condition), it actually reveals a constraint on $w$. If no PWE is involved, i.e., $w(p)=p$, then this constraint is always satisfied. This is because we can select $x$ such that the denominator on the right hand side is the total bandwidth of the SP under the EUT game. In this case, $x$ is lower bounded by the minimum service guarantee of all the users under the NE of the EUT game. However, as the PWE sets in, the denominator of the right hand side increases monotonically, indicating that in order for the SP to recover the revenue under the EUT game, the user's cannot have a too skewed perception of the probability. In case of Prelec's PWF, it means that for every value of $BW_{max,EUT}$, there is a minimum $\alpha$ below which the SP is unable to recover her revenue. This is also a necessary and sufficient condition for the SP to recover her revenue under the EUT game.
\end{remark}

\subsection{Rate control}

Lastly, we consider the option of rate control, which allows the SP to optimize over the rate she offers to the users, the bandwidth allocation, but constraining the total bandwidth to be the same as in the EUT game. Here, a necessary and sufficient condition is specified as follows.

\begin{proposition}

A necessary and sufficient condition for the SP to recover her revenue under the EUT game is that the SP has sufficient bandwidth under the EUT game. Mathematically,
\begin{align}
BW_{max,EUT}&=\sum_{i\in S_{EUT}}BW_{i,PT}\notag\\&>\inf_{b_{1,PT}}\sum_{i\in S_{EUT}}\bar{F}^{-1}_{B_i}\left(w^{-1}\left(\lambda_{i,RC}\right);b_{1,PT}\right),
\end{align}
where
\begin{align}
\lambda_{i,RC}=\frac{r_{EUT}(b^*_{1,EUT})+c_1(b_{1,PT}-b^*_{1,EUT})}{h_i(b_{1,PT})}.
\end{align}

\end{proposition}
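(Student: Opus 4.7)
The plan is to mirror the admission-control argument, substituting rate variation for user-set variation. I would split the proof into necessity and sufficiency.

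For necessity, I would start from revenue preservation. Since the RRM-like constraints here enforce $S_{EUT}=S_{PT}$ and $BW_{max,EUT}=BW_{max,PT}$, the equation
\begin{align*}
|S_{EUT}|(r_{EUT}(b^*_{1,EUT})-c_1b^*_{1,EUT})-c_3BW_{max,EUT}\\=|S_{EUT}|(r_{PT}(b^*_{1,PT})-c_1b^*_{1,PT})-c_3BW_{max,EUT}
\end{align*}
collapses (the $c_3$ terms cancel, and $|S_{EUT}|$ divides out) to the pricing identity $r_{PT}(b^*_{1,PT})=r_{EUT}(b^*_{1,EUT})+c_1(b^*_{1,PT}-b^*_{1,EUT})$. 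This is exactly the numerator appearing in $\lambda_{i,RC}$, so the derivation of $\lambda_{i,RC}$ is forced upon us.

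Next, for each $i\in S_{EUT}$ the user's acceptance condition under PWE is $r_{PT}(b^*_{1,PT})<h_i(b^*_{1,PT})w(\bar{F}_{B_i}(b^*_{1,PT};BW_{i,PT}))$. Using monotonicity and invertibility of $w(\cdot)$ and $\bar{F}_{B_i}(\cdot;b^*_{1,PT})$, I would rearrange this to $BW_{i,PT}>\bar{F}^{-1}_{B_i}(w^{-1}(\lambda_{i,RC});b^*_{1,PT})$. Summing over $i\in S_{EUT}$ and using $\sum_i BW_{i,PT}=BW_{max,EUT}$ produces the inequality in the proposition for the particular $b^*_{1,PT}$ chosen. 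Since $b^*_{1,PT}$ is a free design variable for the SP, the tightest necessary condition is obtained by taking the infimum over all candidate rates, yielding the stated bound.

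For sufficiency, suppose the infimum bound holds, so some rate $b_{1,PT}$ together with a bandwidth allocation satisfying $BW_{i,PT}>\bar{F}^{-1}_{B_i}(w^{-1}(\lambda_{i,RC});b_{1,PT})$ for all $i\in S_{EUT}$ fits inside $BW_{max,EUT}$; take each $BW_{i,PT}$ to be slightly above its lower bound. Define $r_{PT}(b_{1,PT})$ by the pricing identity derived above. By construction the acceptance inequality holds strictly for every user, so all users accept, and revenue equals the EUT-game revenue. I would finish by noting that the SP's revenue under this strategy is automatically positive since it matches the EUT revenue, which was assumed positive by Theorem~\ref{theorem1}.

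The only mild obstacle is careful bookkeeping around the infimum: I would point out that strict inequality versus infimum can be handled by the usual $\epsilon$-argument (pick a rate achieving within $\epsilon$ of the infimum and allocate $\epsilon/|S_{EUT}|$ extra to each user), together with an implicit assumption that the range of $\bar{F}^{-1}_{B_i}(w^{-1}(\cdot);b)$ is well-defined for the relevant arguments, which is inherited from the earlier modeling assumptions (monotonicity of $\bar{F}_{B_i}$ in bandwidth and of $w$). Apart from that, the argument is essentially the admission-control proof with the role of $|S_{EUT}|/|S_{PT}|$ played by the rate-dependent ratio $r_{EUT}(b^*_{1,EUT})+c_1(b_{1,PT}-b^*_{1,EUT})$ over $h_i(b_{1,PT})$.
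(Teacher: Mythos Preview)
Your proposal is correct and follows essentially the same route as the paper: derive the pricing identity $r_{PT}(b^*_{1,PT})=r_{EUT}(b^*_{1,EUT})+c_1(b^*_{1,PT}-b^*_{1,EUT})$ from revenue preservation with $S_{EUT}=S_{PT}$ and $BW_{max,EUT}=BW_{max,PT}$, invert the acceptance condition user by user, sum, and take the infimum over the candidate rate. Your sufficiency argument is in fact more explicit than the paper's (which simply says sufficiency ``follows immediately'' by choosing the minimizing rate), and your remark on the $\epsilon$-handling of the infimum is a reasonable addition that the paper omits.
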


\begin{proof}

Starting from the same equation, in order for the SP to recover the revenue, there must exist a price under the NE of the PT game such that $|S_{EUT}|(r_{EUT}(b^*_{1,EUT})-c_1b^*_{1,EUT})-c_3BW_{max,EUT}=|S_{PT}|(r_{PT}(b^*_{1,PT})-c_1b^*_{1,PT})-c_3BW_{max,PT}$, where now we have $S_{EUT}=S_{PT}$ and $BW_{max,EUT}=BW_{max,PT}$. Thus, $r_{EUT}(b^*_{1,EUT})-c_1b^*_{1,EUT}=r_{PT}(b^*_{1,PT})-c_1b^*_{1,PT}$, and the form of the pricing function under the PT game at rate $b^*_{1,PT}$ is $r_{PT}(b^*_{1,PT})=r_{EUT}(b^*_{1,EUT})+c_1(b^*_{1,PT}-b^*_{1,EUT})$. Once again, the condition for the users to accept the offer (\ref{accept}) must apply. Hence, $\forall i\in S_{EUT}$, $r_{EUT}(b^*_{1,EUT})+c_1(b^*_{1,PT}-b^*_{1,EUT})<h_i(b^*_{1,PT})w(\bar{F}_{B_i}(b^*_{1,PT};BW_{i,PT}))$. Thus,
\begin{align}
BW_{i,PT}&>\bar{F}^{-1}_{B_i}\left(w^{-1}\left(\lambda_{i,RC}\right);b^*_{1,PT}\right),
\end{align}
indicating that
\begin{align}
BW_{max,EUT}&=\sum_{i\in S_{EUT}}BW_{i,PT}\notag\\&>\inf_{b_{1,PT}}\sum_{i\in S_{EUT}}\bar{F}^{-1}_{B_i}\left(w^{-1}\left(\lambda_{i,RC}\right);b_{1,PT}\right).
\end{align}

Here $b_{1,PT}>0$ and satisfies the constraint such that $h_i(b_{1,PT})-c_1b_{1,PT}<r_{EUT}(b^*_{1,EUT})-c_1b^*_{1,EUT}$, but can be further constrained if we desire to avoid letting the SP to recover the revenue by offering the users a substantially lower rate.

The sufficiency follows immediately if we choose $b_{1,PT}$ to be the minimizing solution of the above equation.

\end{proof}

\begin{remark}

The result shown above automatically implies that the probability weighting function of the users cannot be too skewed. In case of Prelec's PWF, as $\alpha$ decreases, the right hand side monotonically increases, showing that for every original design of the EUT game, there is a minimum $\alpha$ below which the SP cannot recover her revenue.
\end{remark}

\subsection{Summary}

The result shown by the previous subsections also indicates that the maximum amount of revenue that could be retained by combining each of the strategies of bandwidth expansion/reduction, rate control, admission control with prospect pricing is different. Since the relative performance of the three methods above can be evaluated by comparing the minimum amount of bandwidth needed for the SP to recover her anticipated revenue, the method that corresponds to the lowest threshold is most robust. With further assumptions regarding the forms of the service guarantee and the user's benefit, we compare the performance of those three strategies numerically in the next section.

\section{Numerical Results}
\subsection{Experiment setup}
In this section, we demonstrate some of the conclusions drawn above. We consider a scenario where $N=10$ users are spread uniformly within a single cell with a radius of 800 meters. There are no interference between different users, and we assume that the SP offers the service to all the users. Each user experiences a combination of path loss, shadowing, and Rayleigh fading. The guarantee of the service for each user in this setup is one minus the outage probability of the fading channel between the user and the base station and the rate offered is the encoding rate at the transmitter. The path loss and shadowing are calculated using a simplified model \cite{Goldsmith}
\begin{align}
P_{r_i}=P_{t}+K-\gamma \log_{10}\frac{d}{d_0}+\varphi_{i,dB},
\end{align}
where $P_{t}$ and $P_{r_i}$ are the transmitted signal power and the received signal power at the $i$-th user in decibels, $K$ is a constant taking the value $-20\log_{10} (4\pi d_0/\lambda)$. $\gamma$ is the path loss exponent, $d$ is the distance between the user and the base station antenna, and $d_0$ is the reference distance for the antenna far-field. In addition, $\varphi_{i,dB}$ is a Gaussian random variable that captures the effect of shadow fading. Finally, the guarantee function for each user can be expressed as
\begin{align}
\bar{F}_{B_i}(b)=\exp\left\{-\frac{2^{\frac{b}{BW_{i,EUT}}}-1}{P_{r_i}/(N_0BW_{i,EUT})}\right\}.
\end{align}
where $N_0$ is the power spectral density (PSD) of the noise, $BW_{i,EUT}$ represents the bandwidth allocated to the $i$-th user, and $b$ represents the encoding rate of the SP.

A list of the values for the parameters can be found in the following table.
\begin{table}
\label{simpar}
\begin{center}
\caption{Parameters used for simulation}
\begin{tabular}{|c|c|c|}

\hline Parameter & Meaning & Value \\\hline
$P_t$ & Transmission power & $10$ W \\\hline
 $K$ & Antenna dependent constant & $-64.5$ dB \\\hline
 $N_0$ & PSD of thermal noise & $-174$ dBm \\\hline
 $d_0$ & Reference distance for the antenna far-field & $20$ m \\\hline
 $\gamma$ & Path loss exponent & $4$ \\\hline
 $\sigma$ & Standard deviation for $\varphi_{i,dB}$ & 4 \\\hline
 $r$ & Cell radius & $800$ m \\\hline
\end{tabular}
\end{center}
\end{table}

\subsection{Bandwidth reallocation}
In Figure ~\ref{RevenueLoss}, the minimum revenue loss with and without enforcing strict RRM constraints are shown. The horizontal axis represents different values of $\alpha$, the parameter that captures the level of probability weighting of the users, while the vertical axis represents the revenue loss normalized by the revenue the SP makes under the EUT game. The total amount of bandwidth is 10 percent more than the minimum bandwidth needed for all the users to accept the offer with probability 1, and is allocated in a way such that each user receives 10 percent more bandwidth than the minimum amount needed for her to accept the offer under the EUT game. The blue curve shows the minimum revenue loss when strict RRM constraints are enforced, while the red curve corresponds to the case where the SP is allowed to violate the RRM constraints by reallocating the bandwidth among the users. As can be seen from the graph, when $\alpha=1$, the users weight the service guarantee accurately, and no revenue is lost. As $\alpha$ decreases, the system with blue curve starts to lose revenue first, and the revenue loss is always higher than that corresponding to the system that allows bandwidth reallocation, which corresponds to our result in Theorem \ref{theorem4}. It can also be seen from the plot that the system with strict RRM constraints do not start losing revenue until $\alpha$ is smaller than 0.94. This is because of the extra 10 percent of bandwidth, which holds (\ref{sufficient}) for all the users when $\alpha\geq 0.94$. Finally, we point out that when $\alpha<0.93$, the difference between the revenue losses of the two systems are roughly 1 percent of the total revenue of the SP. Since $N=10$, this converts to roughly 10 percent of revenue the SP makes from a single user. The price at rate $b^*_{1,EUT}$ under the EUT and PT game is shown in Figure ~\ref{rpt}. We can see that the price reduction is smaller when the SP is allowed to re-allocate her bandwidth.

\begin{figure}[!t]
\centering
\includegraphics[width=\linewidth]{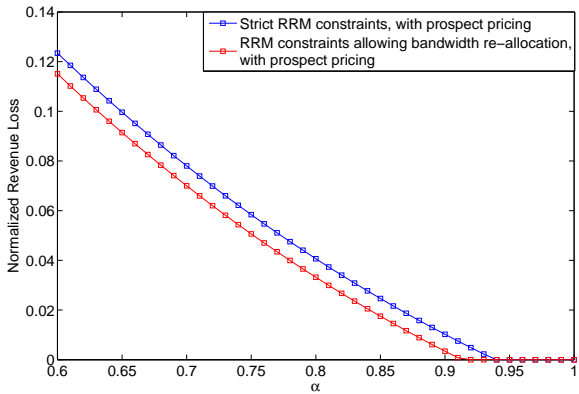}
\caption{Revenue loss of the SP normalized by the revenue under EUT game. $N=10$, $h_i(b)=10^{-2}\times (b\times 10^{-3})^{0.65}$, $r_{EUT}(b)=2\times10^{-3}\times(b\times10^{-3})^{0.82}$, $c_1=\frac{1}{3}\times 10^{-6}$, $c_3=10^{-8}$, $c_i(b;BW_{i,EUT})=c_1b+c_3BW_{i,EUT}$ $b^*_{1,EUT}\approx 7Mbps$, $BW_{max,EUT}\approx 14MHz$.}
\label{RevenueLoss}
\end{figure}

\begin{figure}[!t]
\centering
\includegraphics[width=\linewidth]{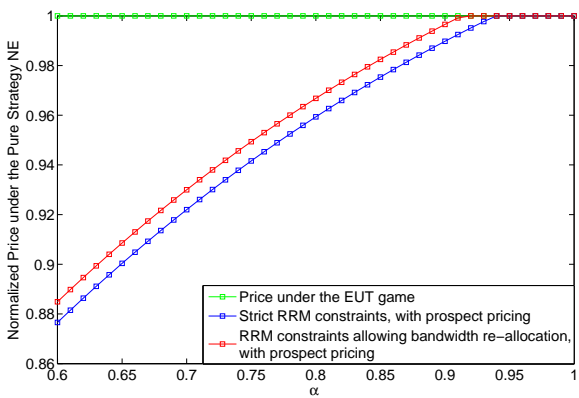}
\caption{Price of the SP at rate $b^*_{1,EUT}$ normalized by the price under the EUT game. $N=10$, $h_i(b)=10^{-2}\times (b\times 10^{-3})^{0.65}$, $r_{EUT}(b)=2\times10^{-3}\times(b\times10^{-3})^{0.82}$, $c_1=\frac{1}{3}\times 10^{-6}$, $c_3=10^{-8}$, $c_i(b;BW_{i,EUT})=c_1b+c_3BW_{i,EUT}$, $b^*_{1,EUT}\approx 7Mbps$, $BW_{max,EUT}\approx 14MHz$.}
\label{rpt}
\end{figure}

\subsection{Bandwidth expansion/reduction }

To illustrate the effect of the bandwidth expansion/reduction, we show minimum amount of bandwidth under the EUT game required for the SP to recover her revenue and normalize it by $BW_{max,EUT}$, and we show the corresponding maximum revenue under the PT game, normalized by the revenue of the SP under the EUT game. We also show $BW_{max,EUT}$ normalized by itself (which is equal to the horizontal line $f(x)=1$). It can be immediately seen from Figure \ref{fig:bandwidthexpansion1} that, when $\alpha$ is higher than 0.89, the maximum revenue under the PT game goes above 1 after normalization, which implies that the SP is able to recover her revenue under the PT game completely. The same threshold is also exactly the same crossing of the curves showing the minimum bandwidth requirement under the EUT game and the horizontal line showing the normalized system bandwidth under the EUT game. This illustrates our proposition, since on the right hand side of the crossing, the actual bandwidth of the SP under the EUT game is above the threshold, which implies that she is able to recover the revenue completely.
\begin{figure}
\centering
\includegraphics[width=\linewidth]{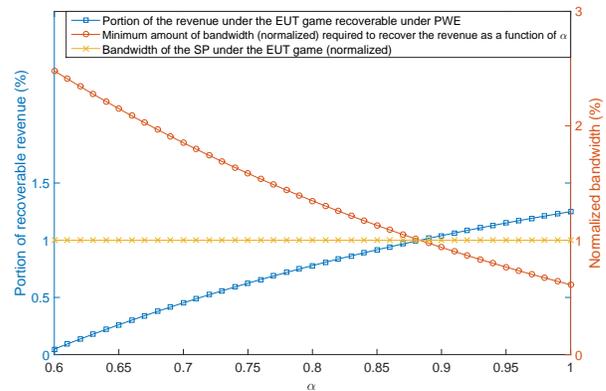}
\caption{With prospect pricing, the minimum amount of bandwidth needed to recover the revenue in full and the maximum amount of revenue attainable under the PT game with the total bandwidth constraint $BW_{max,EUT}$. }
\label{fig:bandwidthexpansion1}
\end{figure}


\subsection{Rate control}

It can be expected that the performance of the rate control would display a similar pattern to the results of bandwidth expansion/reduction. We thus show the results as a comparison to the other methods in last subsection.

\subsection{Admission control}

The result of admission control is shown in Figures \ref{fig:Adctrlprice} and \ref{fig:Adctrlloss}, where we have considered a 50-user scenario, and have plotted the pricing function of the SP under the NE versus $\alpha$ for different levels of admission control. Each time the SP applies the admission control to the current user set, she drops the user that consumes most bandwidth. It can be seen from Figure \ref{fig:Adctrlloss} that, when no admission control is applied, the SP suffers revenue loss. However, upon excluding one user, she is able to redistribute the bandwidth among the remaining users, raise the service guarantee for them. The SP is also able to mitigate the impact of the PWE and recover her revenue for alpha above a certain threshold for each different value of $|S_{PT}|$.
\begin{figure}[!t]
\centering
\includegraphics[width=\linewidth]{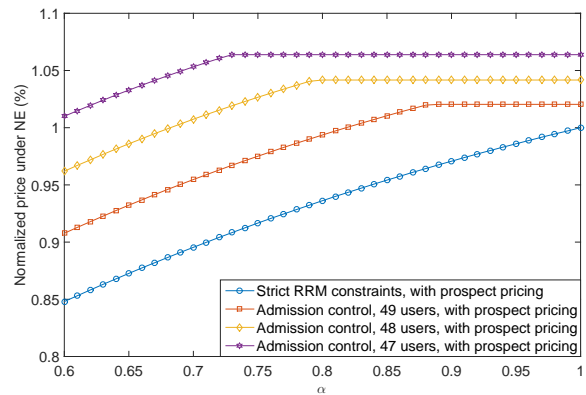}
\caption{Admission control applied to 50 users distributed in the cell, excluding up to 3 users.}
\label{fig:Adctrlprice}
\end{figure}
\begin{figure}[!t]
\centering
\includegraphics[width=\linewidth]{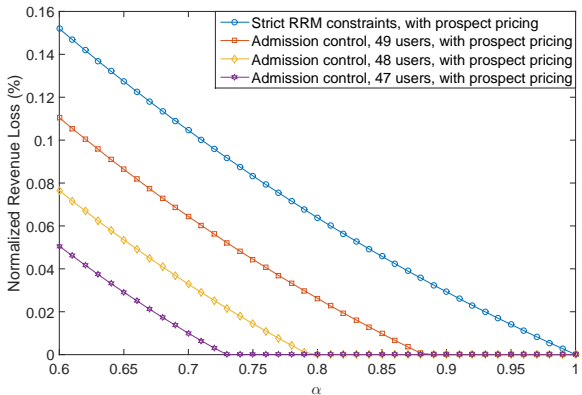}
\caption{Admission control applied to 50 users distributed in the cell, excluding up to 3 users.}
\label{fig:Adctrlloss}
\end{figure}

The characterization of the minimum amount of bandwidth is shown in next subsection.

\subsection{Performance comparison}

In Figure \ref{fig:perfcomp1}, we show the minimum amount of bandwidth needed for the three methods to help the SP retain revenue with prospect pricing, and show the minimum bandwidth needed in order for the SP to retain partial RRM constraints without prospect pricing. It can be immediately seen from the graph that the bandwidth expansion/reduction without prospect pricing requires the largest amount of bandwidth for low values of $\alpha$, indicating that it's the least robust against the probability weighting effect, and it cannot help the SP to completely recover her revenue under the EUT game as can be seen in Figure \ref{fig:perfcomp2}. Secondly, when $\alpha$ is below 0.96, there is no solution for admission control. This shows that admission control is not effective against low $\alpha$ when the number of users is low, since the spare bandwidth recycled from the denied user cannot efficiently raise the perceived service guarantee of the remaining users. However, admission control is able to recover the revenue when $\alpha$ is close to 1. Finally, in this particular case, the rate control is the most efficient method in recovering the revenue. Part of the reason is that the cost for data rate is higher than the cost for bandwidth, giving more freedom to the method of rate control.

\begin{figure}[!t]
\centering
\includegraphics[width=\linewidth]{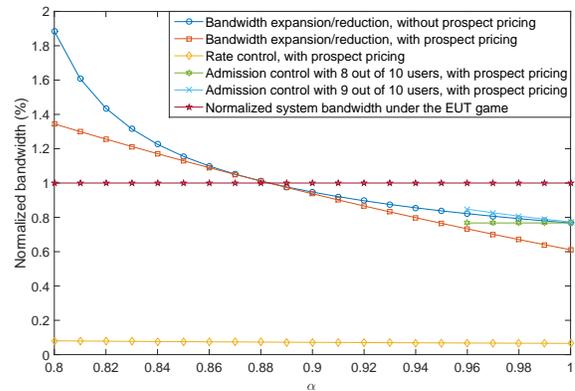}
\caption{Minimum bandwidth required for the SP to retain partial RRM constraints under the PT game without prospect pricing (bandwidth expansion/reduction without prospect pricing) and other methods to retain revenue}
\label{fig:perfcomp1}
\end{figure}

\begin{figure}[!t]
\centering
\includegraphics[width=\linewidth]{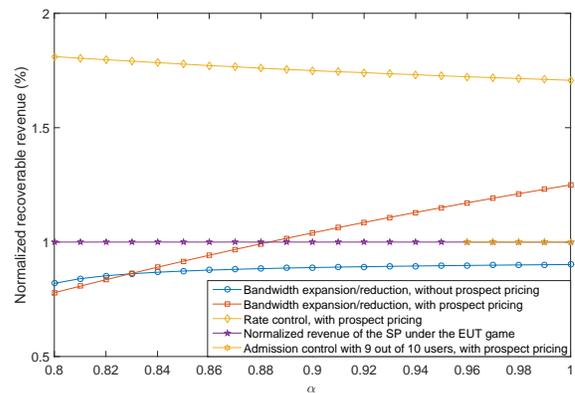}
\caption{Maximum normalized recoverable revenue for the SP through bandwidth expansion/reduction with/without prospect pricing, rate allocation, and admission control with 9 out of 10 users.}
\label{fig:perfcomp2}
\end{figure}

\section{Psychophysics experiments with video QoS}
\label{sec:psy}

In this section, we provide experimental data which supports the procedure of modeling the end-user's probability weighting effect with Prelec's PWF. Specifically, we conducted human subject studies as it relates to the perception of video service quality and then used these studies to estimate the parameter $\alpha$ that reflects people's weighting effect on the uncertainty in QoS. The experiment was conducted using a testbed shown in Figure \ref{fig:platform} with 23 psychology college students, where each subject is asked to assess the quality of a 1 hour video comprised of 30 2-minute segments, where each segment of the video is subject to different packet loss and delay parameters. The testbed comprises a single compute/communication device (the programmable ORBIT radio node \cite{raychaudhuri2005overview}) with two major software components (i) a network emulation module (NETEM), and (ii) a content caching module. The radio modem in the ORBIT node is used to implement a soft access point which end-users connect to. All the traffic coming to the access point is subject to traffic shaping policies as specified in the NETEM module, specifically to control wireless network performance in terms of packet loss and delay. To alleviate the artifacts of wide area internet connectivity on the experimental conditions, we logically created a local caching functionality in the platform. The end-user interface device is a laptop used to watch the video.

Using the testbed, for each pair of packet loss and delay chosen, we are able to objectively measure the corresponding decoded frames per second at the video player used to display the video. Our psychophysics experiments have revealed that the decoded video frames per second serves as the best objective proxy for the quality of the video among the parameters chosen, while the feelings about the number of stops and stutters occurred is the best proxy for the subjective ratings on the overall video quality. The human subjects are also asked to subjectively evaluate on a four level scale the quality of the video as they perceive it, with 4 being the highest rating and 1 being the lowest rating. 

Tables \ref{tab:subj} and \ref{tab:obj} show the subjective (on a scale of 1-4) and objective (decoded video frames per second) measurements along with their mean and standard deviation. As can be seen in the tables, the highest actual video quality corresponds to the unit in the upper left corner, where no packet loss and delay are present. The lowest video qualities being rated are the units just above the blackened out units. The blackened out area of the tables essentially refers to the situations where there quality of the wireless channel is so poor that there is no video displayed in the player. Even the raw data in terms of the subjective scores reveals that there is tendency of the human subjects to ``underweight" the best (even perfect) video quality and ``overweight" the worst case video quality. This effect can also be observed explicitly in Figure \ref{fig:cluster}, where we show the relationship between the subjective rating and the objective metric with 95\% confidence level. 

It also follows from the objective measurements in Table \ref{tab:obj}, that the x-axis in Figure 11 can be mapped directly as a proxy for the objective probability of service guarantee.  In order to map the relationship between objective and subjective probabilities to that of a Prelec-like PWE, as a first cut, we use a simple uniform mapping of the subjective measurements to the region $[0,1]$. The result is depicted in Figure \ref{fig:fitting}, where we obtain the probability of each frame being displayed successfully as $p$, and the probability of the participant believing that the video is uninterrupted as $w(p)$. The relationship between these two variables display an inverse S-shaped probability weighting effect. We fit a parametric function of the Prelec form to the above data set and the resulting parameter $\alpha$ that minimizes the mean-squared error (MSE) is found to be $\alpha \approx 0.585$.  

Note that there have been efforts to subjectively evaluate video QoS \cite{seshadrinathan2010study} that have used various technical measures such as peak signal to noise ratio (PSNR) but there have been none to evaluate the probability weighting effect (psychophysics function) such as undertaken here. The human subject studies presented here is the first such effort and will be further expanded to include larger data sets as well as more detailed mapping techniques to map objective and subjective measurements to the corresponding probabilities of service guarantees (uncertainty). Further, such psychophysics studies can also be conducted by the SP for learning each individual user's subjective perceptions to objective metrics and can be easily implemented via appropriate ``apps" on end-user devices such as smart phones.

\begin{figure}
\centering
\includegraphics[width=\linewidth]{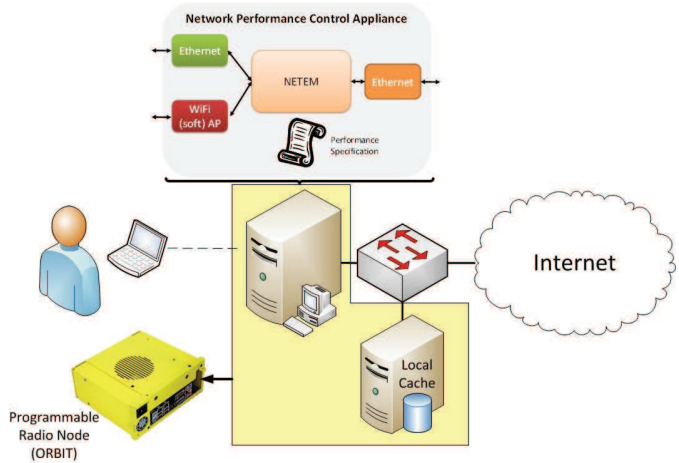}
\caption{Experimental platform illustration}
\label{fig:platform}
\end{figure}

\begin{table*}[]
\centering
\begin{tabular}{|c|c|c|c|c|c|c|c|c|c|c|c|}
\hline\multicolumn{12}{|c|}{Subjective Ratings}\\\hline
\multicolumn{2}{|c|}{Packet Loss (\%)} & \multicolumn{2}{|c|}{0} & \multicolumn{2}{|c|}{2} & \multicolumn{2}{|c|}{4} & \multicolumn{2}{|c|}{8} & \multicolumn{2}{|c|}{16} \\\hline
\multicolumn{2}{|c|}{Delay (ms)} & mean & dev & mean & dev & mean & dev & mean & dev & mean & dev \\\hline \multicolumn{2}{|c|}{0} & 3.67 & 0.64 & 3.71 & 0.56 & 3.60 & 0.58 & 3.16 & 1.02 & 1.29 & 0.45 \\\hline \multicolumn{2}{|c|}{20} & 3.80 & 0.40 & 2.89 & 1.15 & 2.66 & 1.10 & 1.54 & 0.77 & 1.16 & 0.36 \\\hline \multicolumn{2}{|c|}{40} & 3.82 & 0.48 & 2.00 & 0.85 & 1.54 & 0.86 & 1.27 & 0.54 & 1.10 & 0.30 \\\hline \multicolumn{2}{|c|}{60} & 3.58 & 0.76 & 1.71 & 0.87 & 1.34 & 0.57 & 1.25 & 0.54 & 1.21 & 0.52 \\\hline \multicolumn{2}{|c|}{80} & 3.69 & 0.71 & 1.40 & 0.77 & 1.21 & 0.41 & 1.10 & 0.44 & \cellcolor{black} & \cellcolor{black} \\\hline \multicolumn{2}{|c|}{160} & 3.53 & 0.63 & 1.17 & 0.48 & 1.24 & 0.64 & \cellcolor{black} & \cellcolor{black} & \cellcolor{black} & \cellcolor{black} \\\hline \multicolumn{2}{|c|}{320} & 2.65 & 0.92 & 1.10 & 0.29 & \cellcolor{black} & \cellcolor{black} & \cellcolor{black} & \cellcolor{black} & \cellcolor{black} & \cellcolor{black}\\\hline \multicolumn{2}{|c|}{640} & 1.87 & 0.99 & \cellcolor{black} & \cellcolor{black} & \cellcolor{black} & \cellcolor{black} & \cellcolor{black} & \cellcolor{black} & \cellcolor{black} & \cellcolor{black} \\\hline
\end{tabular}
\caption{Subjective Measurement}
\label{tab:subj}
\end{table*}

\begin{table*}[]
\centering
\begin{tabular}{|c|c|c|c|c|c|c|c|c|c|c|c|}
\hline\multicolumn{12}{|c|}{Decoded Video Frames per Second}\\\hline
\multicolumn{2}{|c|}{Packet Loss (\%)} & \multicolumn{2}{|c|}{0} & \multicolumn{2}{|c|}{2} & \multicolumn{2}{|c|}{4} & \multicolumn{2}{|c|}{8} & \multicolumn{2}{|c|}{16} \\\hline
\multicolumn{2}{|c|}{Delay (ms)} & mean & dev & mean & dev & mean & dev & mean & dev & mean & dev \\\hline \multicolumn{2}{|c|}{0} & 21.28 & 1.79 & 21.50 & 1.81 & 21.27 & 1.83 & 19.67 & 3.30 & 5.32 & 2.46 \\\hline \multicolumn{2}{|c|}{20} & 21.55 & 1.50 & 18.01 & 4.87 & 15.28 & 5.39 & 9.26 & 4.10 & 3.39 & 2.77 \\\hline \multicolumn{2}{|c|}{40} & 21.84 & 1.43 & 13.33 & 4.97 & 9.79 & 3.71 & 5.11 & 2.59 & 2.90 & 1.79 \\\hline \multicolumn{2}{|c|}{60} & 20.92 & 4.34 & 10.28 & 5.08 & 7.54 & 3.70 & 5.74 & 4.34 & 2.47 & 1.70 \\\hline \multicolumn{2}{|c|}{80} & 21.62 & 1.60 & 7.50 & 4.22 & 5.48 & 3.08 & 3.65 & 2.44 & \cellcolor{black} & \cellcolor{black} \\\hline \multicolumn{2}{|c|}{160} & 20.14 & 2.53 & 4.66 & 2.48 & 3.94 & 4.10 & \cellcolor{black} & \cellcolor{black} & \cellcolor{black} & \cellcolor{black} \\\hline \multicolumn{2}{|c|}{320} & 16.97 & 4.39 & 3.67 & 1.74 & \cellcolor{black} & \cellcolor{black} & \cellcolor{black} & \cellcolor{black} & \cellcolor{black} & \cellcolor{black}\\\hline \multicolumn{2}{|c|}{640} & 11.71 & 4.94 & \cellcolor{black} & \cellcolor{black} & \cellcolor{black} & \cellcolor{black} & \cellcolor{black} & \cellcolor{black} & \cellcolor{black} & \cellcolor{black} \\\hline
\end{tabular}
\caption{Objective Measurement}
\label{tab:obj}
\end{table*}

\begin{figure}[!t]
\centering
\includegraphics[width=\linewidth]{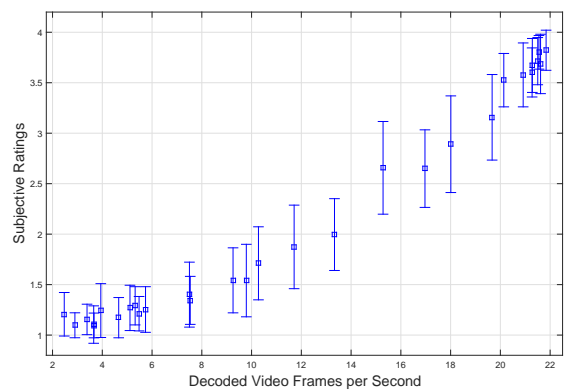}
\caption{Quality of service ratings shown as a function of decoded video frames per second with 95\% confidence level.}
\label{fig:cluster}
\end{figure}

\begin{figure}[!t]
\centering
\includegraphics[width=\linewidth]{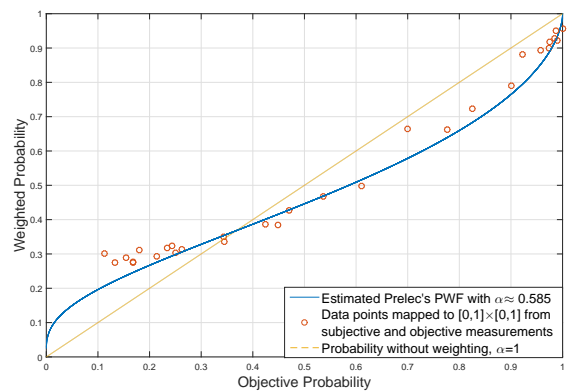}
\caption{The probability weighting effect can be well approximated with a Prelec's PWF with $\alpha\approx 0.585$.}
\label{fig:fitting}
\end{figure}

\section{Conclusion and Discussion}
In this work, we considered the impact of end-users' decisions in regard to service offers in a CRN when there is uncertainty in the QoS guarantee offered by the SP. We formulated a Stackelberg game to study the interplay between the price offerings, bandwidth allocation by the SP and the service choices made by end-users. We characterized the NE of the game, and showed that when the end-users under-weight the service guarantee, they tend to reject the service offers which results in under-utilization of radio resources and revenue loss for the SP. To combat this effect, we proposed prospect pricing, which combines the pricing strategy of the SP with the radio resource management strategy available under a CRN setting. In particular, we studied four distinct strategies, namely bandwidth reallocation, bandwidth expansion/reduction, rate control, admission control, and studied the capability of each individual strategy in helping to improve the utilization of radio resources and enable the SP to recover her revenue loss. Our results first show that the SP must have sufficient bandwidth in order to combat the under-weighting effect by the end-users without prospect pricing, and if the bandwidth is insufficient, then bandwidth reallocation alone cannot help the SP recover her revenue. As for the remaining three strategies, our results show that, for each individual strategy, there is a threshold dependent on the skewness of the end-users' PWF and the unit cost for data rate and bandwidth, such that in order for the SP to recover her revenue, her total bandwidth under the EUT game must be above this threshold. We also showed that having sufficient bandwidth that is above this threshold (dependent on the strategy taken) is also a necessary and sufficient condition for the SP to be able to recover her revenue. We also compared the performance of the bandwidth expansion/reduction, rate control and admission control strategies with numerical examples that illustrate the threshold effect described above. We also conducted psychophysics experiments with human subjects to assess perceived video QoS over wireless channels and  modeled the probability weighting effect. The focus of this paper has been on studying the PWE effects of end-user behavior and the role FE in influencing end-user behavior is a topic for future study.

\section{Acknowledgement}

This work is supported in part by a U.S. National Science Foundation (NSF) Grant (No. 1421961) under the NeTS program. The authors would like to thank Nilanjan Paul for assisting in the design of the psychophysics testbed, and Alysia Tsui for assisting in the human subjects studies.

\bibliography{biblio}{}
\bibliographystyle{IEEEtran}
\end{document}